\newtheorem{prop}{Proposition}
\newtheorem*{remark}{Remark}
\newtheorem{definition}{Definition}
\title{A new self-exciting jump-diffusion process for option pricing}
\author{
  Luis A. Souto Arias\\
  Mathematical Institute, Utrecht University\\
  The Netherlands\\
  \texttt{l.a.soutoarias@uu.nl} \\
  %% examples of more authors
  \And
  Pasquale Cirillo \\
  ZHAW School of Law and Management\\
  Zurich University of Applied Sciences \\
  Switzerland \\
  \texttt{ciri@zhaw.ch} \\
  \And
  Cornelis W. Oosterlee \\
 Mathematical Institute,  Utrecht University\\
  The Netherlands\\
  \texttt{c.w.oosterlee@uu.nl} \\
}
\begin{document}

\maketitle

\begin{abstract}
We propose a new jump-diffusion process, the Heston-Queue-Hawkes (HQH) model, combining the well-known Heston model and the recently introduced Queue-Hawkes (Q-Hawkes) jump process. Like the Hawkes process, the HQH model can capture the effects of self-excitation and contagion. However, since the characteristic function of the HQH process is known in closed-form, Fourier-based fast pricing algorithms, like the COS method, can be fully exploited with this model. Furthermore, we show that by using partial integrals of the characteristic function, which are also explicitly known for the HQH process, we can reduce the dimensionality of the COS method, and so its numerical complexity. Numerical results for European and Bermudan options show that the HQH model offers a wider range of volatility smiles compared to the Bates model, while its computational burden is considerably smaller than that of the Heston-Hawkes (HH) process.
\end{abstract}

\keywords{Jump clustering \and Queue-Hawkes process \and COS method \and Bermudan option \and Volatility smile}

\section{Introduction}

In recent years, a considerable amount of research has focused on using jumps to capture the effects of self-excitation and contagion observed in financial markets (e.g., \cite{ait-sahalia2015}). The usual Poisson process, due to the independence of its increments, is not able to capture these interrelated effects of self- and cross-excitation. While some studies advocate for jump-in-volatility models to solve this problem (see \cite{eraker2004}), many authors have opted for a more complex jump model instead. Among the proposals in the literature, the Hawkes process of \cite{Hawkes1971a,Hawkes1971b} and \cite{Hawkes1974} with exponential decay function has been extensively studied and applied in many areas of finance. Its success is due to its ability to introduce contagion effects while still being a parsimonious model, which is simple to simulate. Moreover, since it is an affine jump-diffusion (AJD) process, its characteristic function can be obtained by numerically solving a system of ODEs (see \cite{errais2010},\cite{duffie2000}). Among others, the Hawkes process has been applied to: the modelling of interest rates (\cite{sun2021},\cite{dassios2019}); switching regime models (\cite{hainaut2019}); counterparty credit risk (\cite{ma2017}); VIX option pricing (\cite{jing2021}); and to price more complex options such as power exchange options (\cite{pasricha2021}) and variance swaps (\cite{liu2019}). A comprehensive and recent survey of the role of the Hawkes process in finance can be found in \cite{hawkes2020}.

However, the lack of closed-form expressions for either the density or the characteristic function means that its estimation heavily relies on computational methods. These can become very expensive to price path-dependent options---such as Bermudan or barrier options---but even for European style options, some works still rely purely on simulations (\cite{pasricha2021},\cite{ma2017}). In spite of this, the Hawkes process has remained the model of choice for contagion, and many variants of the original process have emerged to cope with more specific and complex phenomena. It is in this context that an alternative to the Hawkes process with fully analytical characteristic function was recently introduced in \cite{Daw2022}. The process there developed, the so-called Queue-Hawkes (Q-Hawkes) process---a particular type ephemerally self-exciting (ESE) process---enjoys similar properties to the Hawkes process, such as self-excitation and reversion to the base level. The main difference between them is that, while for the Hawkes process the reversion is a function of time, for the Q-Hawkes it is fully random. These improved analytical properties make the Q-Hawkes process an interesting alternative to the original Hawkes process in financial applications. To the best of our knowledge, this process has not yet been used in financial applications.

It is, therefore, the main purpose of this paper to compare the Q-Hawkes and Hawkes jump-diffusion processes in several option pricing scenarios, ranging from simple European calls and puts to path-dependent options like Bermudan options. In particular, we model the asset dynamics as a Heston jump-diffusion process, where the diffusion component is given by the Heston model, and the jump component either by the Q-Hawkes or Hawkes processes.

Due to the availability of the characteristic function of the Q-Hawkes process, we use the COS method to price European and Bermudan options (see \cite{fang2008}, \cite{fang2011} and \cite{ruijter2012}). Notice, though, that the Q-Hawkes is a pure jump process, and so its distribution is discrete. On the other hand, the COS method, being based on the cosine transform, can only approximate continuous functions. Thus, in the present paper we propose an extension of the original COS method to cope with discrete distributions. In addition, the characteristic function of the Q-Hawkes process enjoys some integrability properties, as we show in Section \ref{sec:jump_diff}. Hence, in Section \ref{sec:cos} we also illustrate how to modify the COS method to take advantage of this properties. In this way, the proposed methodology takes full advantage of this particular model to efficiently price European and Bermudan options in the presence of self-exciting behaviour.

The paper is organized as follows. In Section \ref{sec:general}, we define the general model for the stock price in the risk-neutral measure, so that we can deploy the usual no-arbitrage arguments to price financial derivatives. In Section \ref{sec:esep}, a detailed account of the main properties of the Q-Hawkes process is given. Moreover, a brief summary of the properties of the Hawkes process is provided in Section \ref{sec:hawk} for completion. Section \ref{sec:cos} focuses on the COS method. First, we propose an extension to cope with discrete random variables and analyze its convergence. Then, we show that under some integrability assumptions the dimension of the COS method can be reduced. The last subsection describes the pricing algorithm for Bermudan options, which is based on \cite{fang2011}. Section \ref{sec:results} contains the numerical experiments, starting from a comparison of the Q-Hawkes and Hawkes processes, and then comparing the prices they yield for European and Bermudan options. Finally, Section \ref{sec:conclusions} concludes the paper.

\section{Self-exciting jump-diffusion processes}\label{sec:jump_diff}

In this section, we develop a jump-diffusion model where the jump component is represented by a self-exciting jump process. For the diffusion component, we use the well-known Heston model \cite{heston1993}. In particular, the jumps affect only the stock, so that the variance and jump processes are independent. The main part of this section is the definition of the Heston-Queue-Hawkes (HQH) process, and the derivation of its characteristic function. When the jumps are given by the Hawkes process instead, we recover the Heston-Hawkes (HH) model, that has been partially studied in \cite{ait-sahalia2015}, \cite{liu2019} and \cite{jing2021}, among others. Here we extend those studies by showing that the joint characteristic function of this model can be expressed as a system of ODEs, which need to be solved numerically. Moreover, because in this paper we use this processes for the purpose of option pricing, the dynamics are already given under the risk-neutral\footnote{If the aim of the model is, for example, time series forecasting, the dynamics in the physical measure $\mathbb{P}$ can also be defined by omitting the intensity component in the asset price of Equation \eqref{eq:esepjd_sde} and by using a different drift than the risk-free interest rate.} measure $\mathbb{Q}$.

\subsection{The general model}\label{sec:general}

Let $(\Omega,\mathcal{F},\mathbb{Q})$ be a complete probability space, equipped with a filtration $\{\mathcal{F}_t\}_{t\geq 0}$, to which all the processes defined below are adapted. Under the risk-neutral  measure $\mathbb{Q}$, the dynamics of the asset price $S_t$ are defined as
\begin{align}\label{eq:esepjd_sde}
    \frac{dS_t}{S_{t_-}} &= (r - \bar{\mu}_Y \lambda_{t_-})\, dt + \sqrt{V_t}\,dW_t^S + (e^{Y_t}-1)\, dN_t,\\
    dV_t &= \kappa(\theta-V_t)+\eta\sqrt{V_t}dW_t^V,
\end{align}
where $V_t$ is the variance process; $r$ is the risk-free interest rate; $W_t^S$ and $W_t^V$ are correlated standard Brownian motions with correlation\footnote{Since the variance and jump intensity processes are independent, the moments of the log-asset price may become infinite depending on the parameter set. This is a consequence of the diffusion part being driven by the Heston model \cite{oosterlee2019}.} $\rho$; $\kappa$, $\theta$ and $\eta$ are the variance speed of mean reversion, long-term mean and vol-of-vol parameters, respectively; and $N_t$ is a counting process with stochastic intensity $\lambda_t$. Finally, the $Y_t$ are i.i.d random variables with support in $(-\infty,\infty)$ and $\bar{\mu}_Y = \mathbb{E}[e^Y-1]$, where $Y$ is referred to as the log-jump size. In this article, we consider two functional forms for the jump intensity $\lambda_t$. Namely, the intensity of the Q-Hawkes process
\begin{equation}\label{eq:qh_inten}
    d\lambda_t = \alpha(dN_t - dN_t^{Q}), 
\end{equation}
and the intensity of the Hawkes process
\begin{equation}\label{eq:h_inten}
    d\lambda_t = \beta(\lambda^*-\lambda) + \alpha dN_t,
\end{equation}
where $N_t$ and $N_t^Q$ are counting processes with stochastic intensities $\lambda_t$ and $\beta Q_t$, respectively; and $\alpha$, $\beta$ and $\lambda^*$ represent the clustering rate, expiration rate and baseline intensity, respectively. A detailed description of these models is to be found in Sections \ref{sec:esep} and \ref{sec:hawk}.

Going back to Equation \eqref{eq:esepjd_sde}, notice that, in the absence of jumps, this reduces to the classical Heston model. When jumps occur, but $\alpha=0$, the jump intensity $\lambda_t$ is constant, and we recover the Bates model \cite{bates1996}. Finally, for $\alpha \geq 0$ the jumps tend to cluster due to the increase in the intensity, generating self-excitation. The way in which this phenomenon manifests itself depends on the particular dynamics of the jump processes, given by Equations \eqref{eq:qh_inten} and \eqref{eq:h_inten}, respectively.

As a result of the independence between the jump and diffusion components, this model presents a high level of tractability. For example, the following proposition gives an analytic expression for the asset dynamics.
\begin{prop}\label{prop:exact_dyn}
    Let
    \begin{equation*}
        D_t = S_0 \exp\left(rt - \frac{1}{2}\int_0^t V_u du + \int_0^t \sqrt{V_u}\,W_u^S - \bar{\mu}_Y\int_0^t \lambda_{u_-} du\right),
    \end{equation*}
    and
    \begin{equation*}
        J_t = \prod_{j=1}^{N_t}e^{Y_j}.
    \end{equation*}
    Then
    \begin{equation}\label{eq:esepjd_an}
        S_t = D_t\,J_t
    \end{equation}
\end{prop}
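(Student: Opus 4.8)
The plan is to verify that the proposed $S_t = D_t\,J_t$ satisfies the SDE \eqref{eq:esepjd_sde} by computing its stochastic differential and matching terms. Since the diffusion and jump components are independent, I would first treat $D_t$ as the continuous part and $J_t$ as the pure-jump part, and then combine them via the product rule for semimartingales. Writing $X_t = \log D_t = \log S_0 + rt - \tfrac{1}{2}\int_0^t V_u\,du + \int_0^t \sqrt{V_u}\,dW_u^S - \bar{\mu}_Y \int_0^t \lambda_{u_-}\,du$, I would apply It\^o's lemma to $D_t = e^{X_t}$. The quadratic variation of $X_t$ comes only from the stochastic integral $\int_0^t \sqrt{V_u}\,dW_u^S$, contributing $\tfrac{1}{2}V_t\,dt$, which exactly cancels the $-\tfrac{1}{2}V_t\,dt$ drift term inside $X_t$. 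This yields
\begin{equation*}
    \frac{dD_t}{D_t} = \left(r - \bar{\mu}_Y \lambda_{t_-}\right)dt + \sqrt{V_t}\,dW_t^S,
\end{equation*}
which is precisely the continuous part of \eqref{eq:esepjd_sde}.

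Next I would handle the jump factor $J_t = \prod_{j=1}^{N_t} e^{Y_j}$. Since $J_t$ is a finite product that changes only at the jump times of $N_t$, it is a pure-jump process whose increment at a jump is $J_t - J_{t_-} = J_{t_-}(e^{Y_t} - 1)$, so that $dJ_t = J_{t_-}(e^{Y_t}-1)\,dN_t$. I would then apply the integration-by-parts (product) formula for the two semimartingales $D_t$ and $J_t$,
\begin{equation*}
    d(D_t J_t) = D_{t_-}\,dJ_t + J_{t_-}\,dD_t + d[D,J]_t.
\end{equation*}
Because $D_t$ is continuous and $J_t$ is a finite-activity pure-jump process independent of the diffusion, the covariation term $d[D,J]_t$ vanishes. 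Substituting the expressions for $dD_t$ and $dJ_t$ and using $S_{t_-} = D_{t_-} J_{t_-}$ gives
\begin{equation*}
    \frac{dS_t}{S_{t_-}} = \left(r - \bar{\mu}_Y \lambda_{t_-}\right)dt + \sqrt{V_t}\,dW_t^S + (e^{Y_t}-1)\,dN_t,
\end{equation*}
recovering \eqref{eq:esepjd_sde}, and the initial condition $S_0 = D_0 J_0$ holds trivially since $J_0 = 1$ (empty product).

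The main subtlety I expect is justifying the vanishing of the covariation term $d[D,J]_t$ and being careful with left-limits at jump times. The diffusion $D_t$ has continuous paths, so its covariation with the purely discontinuous $J_t$ is zero, but this relies on the stated independence between the jump process and the Brownian motions driving the Heston dynamics; I would invoke that independence assumption explicitly. A secondary point is bookkeeping: the factor multiplying $dN_t$ must be $S_{t_-}$ rather than $S_t$, which is automatic once one works consistently with left-limits in the product rule. Everything else is a routine It\^o computation, so the verification is essentially a matter of correctly assembling the product rule for a continuous times a pure-jump semimartingale.
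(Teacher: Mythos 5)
Your proof is correct and takes essentially the same route as the paper, whose proof simply states that the result follows from an application of It\^o's lemma for jump-diffusion processes (citing an analogous computation for the Hawkes case); you have just carried out that verification in full detail. One minor remark: the vanishing of the covariation needs no appeal to independence---since $D_t$ has continuous paths and $J_t$ is a finite-variation pure-jump process, $[D,J]_t=\sum_{0<s\le t}\Delta D_s\,\Delta J_s=0$ pathwise.
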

\begin{proof}
The proof is derived directly from an application of Ito's lemma for jump-diffusion processes. A similar proof can be found in \cite{pasricha2021} for the HH process.
\end{proof}

Notice that, in Proposition \ref{prop:exact_dyn}, the contributions of the variance and the jumps are completely separated. Thus. we can sample the variance as we would do for the Heston model, and simulate the jumps separately. This can be done efficiently using the thinning algorithm of \cite{ogata1981} for point processes. An application of this algorithm to the specific case of the Hawkes process can be found in \cite{pasricha2021}. The thinning algorithm for the particular case of the Q-Hawkes process is presented in Algorithm \ref{alg:thin_esep}.
\begin{algorithm}
\caption{Thinning algorithm for the Q-Hawkes process: pseudocode.}\label{alg:thin_esep}
\begin{algorithmic}
\State Set $s = 0$, $\lambda = \lambda_0$, $n = 0$, $n^{\beta} = 0$.
\While{$s\leq t$}
\State Generate a random variable $U \sim U(0,1)$. \Comment{$U(0,1)$ denotes a standard uniform distribution.}
\State Set $s = s - \ln(U)/(\lambda+\frac{\beta}{\alpha}(\lambda-\lambda^*))$. If $s>t$, exit the loop and end the program. Otherwise a new event occurs, but we still have to determine if it is a $T$ or $T^{\beta}$ jump time.
\State Generate $V \sim U(0,1)$. If $V \leq \lambda/(\lambda+\frac{\beta}{\alpha}(\lambda-\lambda^*))$, set $n = n + 1$ and $T_n = s$. Otherwise set $n^{\beta} = n^{\beta} + 1$ and $T_n^{\beta} = s$.
\State Set $\lambda = \lambda_0 + \alpha \,(n-n^{\beta})$.
\EndWhile
\end{algorithmic}
\end{algorithm}

Whereas combining Algorithm \ref{alg:thin_esep} with an efficient sampling scheme for the Heston model already provides an efficient way of simulating the asset prices, the characteristic function of the log-asset price $X_t = \log(S_t)$ is known in closed-form when the dynamics of the jump process are given by the Q-Hawkes process. On the other hand, a semi-analytical expression for the characteristic function is available when we use the Hawkes process instead. This will allow us to make use of the COS method for fast-pricing of European and Bermudan options \cite{fang2008}. 

First note that
\begin{align}\label{eq:hestonjd_cf_def}
    \mathbb{E}[e^{i v X_t}|\mathcal{F}_0] &= \mathbb{E}\left[exp\left(i v\left(X_0+rt - \frac{1}{2}\int_0^t V_u du + \int_0^t \sqrt{V_u}\,dW_u^S\right)\right)|\mathcal{F}_0\right]\mathbb{E}\left[e^{iv M_t}|\mathcal{F}_0\right] \nonumber \\
    &= \psi_{Heston}(v,t,0)\boldsymbol{\cdot} \psi_{M}(v,t,0),
\end{align}
where we have defined
\begin{equation}\label{eq:mt}
    M_t \coloneqq \sum_{j=1}^{N_t}Y_j - \bar{\mu}_Y \int_0^t \lambda_{s_-}ds.
\end{equation}
In the above, $\psi(\cdot,t,s)$ is the characteristic function operator at time $t$ given information at time $s\leq t$, and $\psi_{Heston}(\cdot,t,s)$ is the characteristic function of the pure Heston model given the filtration $\mathcal{F}_s$.

Therefore, the characteristic function of the log-asset price can be factorized as the characteristic function of the Heston model times the characteristic function of the compensated jump term $M_t$.
\begin{remark}
Notice that Equation \eqref{eq:hestonjd_cf_def} is valid for the Q-Hawkes process, the Hawkes process, or even the Poisson process with constant intensity. Thus, for all these models we can compute the characteristic function of the jump term separately, and combine it with the diffusion term at a later stage.
\end{remark}
The remainder of this section is devoted to studying the properties of the HQH and HH models and their intensities, starting with the Q-Hawkes process.

\subsection{The Heston-Queue-Hawkes model}\label{sec:esep}

In this section, the Q-Hawkes process is discussed, expanding on the analytical properties derived in \cite{Daw2022}. Specifically, we show that the intensity of the Q-Hawkes process admits a closed-form probability mass function (PMF), and that the characteristic function of the compensated jump term $M_t$ (see Equation \eqref{eq:mt}) has an analytical solution.

\subsubsection{The Q-Hawkes process}

The Q-Hawkes process was recently introduced in \cite{Daw2022} as a possible alternative to the Hawkes process. Instead of being determined by a counting process and a time-dependent memory kernel, the Q-Hawkes process consists of two counting processses $(N_t$ and $N_t^{Q})$. The former is the analogous of the Hawkes counting process, while the latter plays the role of a stochastic memory kernel. Both of these counting processes are pure jump processes with stochastic intensities $\lambda_t$ and $\beta Q_t$, respectively, where $\beta \in \mathbb{R}^+\backslash\{0\}$. A crucial part of the model is how these two intensities are related. Following \cite{Daw2022}, we refer to $\lambda_t$ as the intensity of the Q-Hawkes process, and to $Q_t$ as the activation number. These two quantities are connected via the following formulas:
\begin{equation}\label{eq:esep_inten_diff}
    d\lambda_t = \alpha(dN_t-dN_t^{Q}),
\end{equation}
and
\begin{equation}\label{eq:esep_inten}
    \lambda_t = \lambda^* + \alpha Q_t,
\end{equation}
where $\lambda^* \in \mathbb{R}^+$ is a baseline level for $\lambda_t$ and $\alpha \in \mathbb{R}^+$ is a positive constant determining the jump size. 

Equation \eqref{eq:esep_inten} implies that $\lambda_t$ is an affine function of $Q_t$. This reduces the dimensionality of the model to two dimensions (one intensity process and one counting process) and  it is partially responsible for the analytical properties of this model. Other more complex relationships for $\lambda_t$ and $Q_t$ may not admit closed-form solutions.

Equation \eqref{eq:esep_inten_diff} shows many similarities to the intensity of the Hawkes process (see Equations \eqref{eq:hawkes_def} and \eqref{eq:hawkes_diff} in Section \ref{sec:hawk}). In both cases, the intensity $\lambda_t$ has a baseline level $\lambda^*$. Every time $N_t$ jumps, the intensity increases by a positive amount $\alpha$, also known as the clustering rate. This form of self-excitation is the same in the Hawkes model. Moreover, there is a ``loss of memory'' mechanism that erases the self-exciting effect of those jumps. The speed at which self-excitation is forgotten is given in both models by the parameter $\beta$, which is thus called the expiration rate. The main difference is that, for the Q-Hawkes process, this mechanism is a counting process ($N_t^Q$), while for the Hawkes process it is a deterministic function of time. As a result, in the Hawkes process information is continuously deleted, while in the Q-Hawkes process all information from a particular jump of $N_t$ is erased at once when $N_t^{Q}$ jumps. Thus, the analogy with a fully stochastic Hawkes process with exponential memory kernels, and the reason why the Q-Hawkes process is an ephemerally self-exciting process\footnote{In order to associate the Q-Hawkes process with other kernels of the Hawkes process, it would be necessary to use a different distribution for the jumps of $N_t^{Q}$. This is partially explored in \cite{Daw2022}, but in this article we will not pursue this line of research.}. 

From a more quantitative point of view, it can be shown that these models have the same expectations, both in intensities and in counting processes. However, the Q-Hawkes process has larger moments for any higher order $k\geq 2$ (see \cite{Daw2022}). Thus, it is expected that extreme contagion events are more probable in the Q-Hawkes process than in the Hawkes process. In principle, that would make the Q-Hawkes process more adequate to model extreme phenomena such as financial crises, where contagion is at its highest.

Nevertheless, in this paper we are interested in the application of the Q-Hawkes process to option pricing. To this end, we make use of the closed-form formula for its characteristic function, which we present in the next proposition.
\begin{prop} \label{CFQH}
    Let $Q_t$ be the activation number of a Q-Hawkes process with clustering rate $\alpha \in \mathbb{R}^+$, expiration rate $\beta \in \mathbb{R}^+\backslash\{0\}$ and baseline intensity $\lambda^* \in \mathbb{R}^+$. Then, its joint characteristic function with the counting process $N_t$ is given by
    \begin{align}\label{eq:esep_char_fun}
    \mathbb{E}[e^{i(u Q_t + v (N_t-N_s)}|\mathcal{F}_s] =& \, e^{\frac{\lambda^* \tau}{2\alpha}(\beta-\alpha-f(v))}\left(\frac{2f(v)}{f(v)+g(u,v)+e^{-\tau f(v)}(f(v)-g(u,v))} \right)^{\frac{\lambda^*}{\alpha}} \nonumber\\
    & \boldsymbol{\cdot} \left(\frac{(1-e^{-\tau f(v)})(2\beta-e^{iu}(\beta+\alpha))+e^{iu}f(v)(1+e^{-\tau f(v)})}{f(v)+g(u,v)+e^{-\tau f(v)}(f(v)-g(u,v))}\right)^{Q_s},
\end{align}
with $\tau=t-s$, $f(v) = \sqrt{(\beta+\alpha)^2-4\alpha\beta e^{i v}}$ and $g(u,v) = \beta + \alpha(1-2e^{i(u+v)})$.
\end{prop}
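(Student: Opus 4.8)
The plan is to exploit the fact that $(Q_t, N_t)$ is a time-homogeneous Markov jump process and to recover the joint transform from its infinitesimal generator. Writing $\tau = t-s$ and using the time-homogeneity of the increments to set $s=0$ without loss of generality, I would define
\begin{equation*}
w(\tau, q) = \mathbb{E}\left[e^{i\left(uQ_\tau + v N_\tau\right)} \mid Q_0 = q,\, N_0 = 0\right].
\end{equation*}
From the dynamics \eqref{eq:esep_inten_diff}--\eqref{eq:esep_inten}, the chain $Q_t$ increases by one at rate $\lambda^* + \alpha q$ (an arrival of $N_t$, which simultaneously increments $N_t$) and decreases by one at rate $\beta q$ (an arrival of $N_t^Q$, which leaves $N_t$ unchanged). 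The associated Kolmogorov backward equation is
\begin{equation*}
\partial_\tau w(\tau, q) = (\lambda^* + \alpha q)\left[e^{iv} w(\tau, q+1) - w(\tau, q)\right] + \beta q\left[w(\tau, q-1) - w(\tau, q)\right],
\end{equation*}
with initial condition $w(0, q) = e^{iuq}$, the factor $e^{iv}$ encoding the joint jump of $N_t$.

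Since the intensity \eqref{eq:esep_inten} is affine in $Q_t$, I would look for an exponential-affine solution $w(\tau, q) = \exp\!\big(C(\tau) + q\,D(\tau)\big)$. Substituting and matching the $q$-independent and $q$-linear terms decouples the problem into
\begin{align*}
D'(\tau) &= \alpha e^{iv} e^{D} + \beta e^{-D} - (\alpha + \beta), \\
C'(\tau) &= \lambda^*\!\left(e^{iv + D} - 1\right),
\end{align*}
with $D(0) = iu$ and $C(0) = 0$. The substitution $z = e^{-D}$ turns the first equation into the autonomous Riccati equation $z' = -\beta z^2 + (\alpha+\beta) z - \alpha e^{iv}$, whose roots are $z_\pm = \big((\alpha+\beta) \mp f(v)\big)/(2\beta)$. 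This is precisely where $f(v) = \sqrt{(\alpha+\beta)^2 - 4\alpha\beta e^{iv}}$ enters the picture, namely as the square root of the discriminant.

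Third, I would solve the Riccati equation by separation of variables and partial fractions, so that $(z - z_-)/(z - z_+)$ becomes an explicit exponential in $\tau f(v)$, and fix the integration constant from $z(0) = e^{-iu}$. Here the initial datum supplies the $e^{iu}$ dependence, and combining it with the elementary root identities $\beta z_+ z_- = \alpha e^{iv}$ and $\beta(z_+ + z_-) = \alpha + \beta$ reproduces $g(u,v) = \beta(z_+ + z_-) - 2\beta z_+ z_- e^{iu} = \beta + \alpha(1 - 2e^{i(u+v)})$. With $D = -\log z$ known, the equation for $C$ integrates in closed form: using $z'/(z - z_\pm) = -\beta(z - z_\mp)$ one rewrites $C'$ so that $\int_0^\tau z\,ds$ and $\int_0^\tau z^{-1}\,ds$ collapse into logarithms, which yields the $\lambda^*/\alpha$ power and the explicit exponential prefactor. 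Assembling $w = e^{C + Q_0 D}$ and relabelling $Q_0 \to Q_s$ then gives \eqref{eq:esep_char_fun}.

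I expect the main obstacle to be the final algebraic consolidation rather than any conceptual step: one must bundle the $e^{\pm \tau f(v)}$ terms carefully so that the prefactor $e^{\frac{\lambda^*\tau}{2\alpha}(\beta-\alpha-f(v))}$, the $(\,\cdot\,)^{\lambda^*/\alpha}$ factor, and the $(\,\cdot\,)^{Q_s}$ factor each take exactly the stated form, while consistently tracking the branch of the complex square root $f(v)$. A secondary point requiring care is the legitimacy of the ansatz: one should either invoke the theory of affine jump processes to guarantee the exponential-affine structure, or verify directly that the constructed $w$ satisfies the backward equation and invoke a uniqueness argument (using the bound $|w| \le 1$) to conclude that it is indeed the sought characteristic function.
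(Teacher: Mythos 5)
Your proposal is correct, and it takes a genuinely more constructive route than the paper. The paper's proof imports from the Q-Hawkes literature the first-order transport PDE \eqref{eq:Qt_cfpde} satisfied by $\psi_{Q,N}$ in the Fourier variable $u$, and then merely asserts that \eqref{eq:esep_char_fun} can be verified to solve it; the formula itself is never derived there (the method of characteristics is only spelled out later, for Proposition \ref{prop:mt_char_ese}). You instead stay in state space: you derive the Kolmogorov backward equation for the chain $(Q_t,N_t)$ --- correctly, with $Q$ moving up at rate $\lambda^*+\alpha q$ jointly with $N$ (hence the $e^{iv}$ factor) and down at rate $\beta q$ --- posit the exponential-affine form $e^{C(\tau)+qD(\tau)}$, and reduce to a scalar Riccati equation via $z=e^{-D}$. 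The algebra checks out: the discriminant of $\beta z^2-(\alpha+\beta)z+\alpha e^{iv}$ is $f(v)^2$, the root identities $\beta z_+z_-=\alpha e^{iv}$ and $\beta(z_++z_-)=\alpha+\beta$ reproduce $g(u,v)$, and solving the Riccati flow from $z(0)=e^{-iu}$ and clearing denominators yields exactly the $(\cdot)^{Q_s}$ factor of \eqref{eq:esep_char_fun}, with the integration of $C$ supplying the exponential prefactor and the $\lambda^*/\alpha$ power. The two routes are equivalent at heart --- your Riccati flow is precisely the characteristic flow of the paper's PDE, since multiplication by $q$ dualizes to $-i\partial_u$ --- but yours is self-contained and shows where the formula comes from, at the cost of length, while the paper's is short at the cost of leaning on a cited PDE plus an unchecked ``one can verify'' step. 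Your closing remark on legitimizing the ansatz (direct verification plus a uniqueness argument using $|w|\le 1$) is well taken; note that the paper's verification-style proof implicitly needs the same uniqueness statement to be conclusive.
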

\begin{proof}
The joint characteristic function of $Q_t$ and $N_t^{Q}$ was developed in \cite{Daw2022}. Given Equation \eqref{eq:esep_inten}, it is simple to obtain the characteristic function for $Q_t$ and $N_t$ instead. In particular, denoting by $\psi_{Q,N}(u,v,t,s) \coloneqq \mathbb{E}[e^{i(u Q_t + v (N_t-N_s)}|\mathcal{F}_s]$ the characteristic function of the processes $(Q_t,N_t)$ at time $t$, given information at time $s \leq t$, we deduce from \cite{Daw2022} that $\psi_{Q,N}(u,v,t,s)$ is a solution to the following partial differential equation (PDE):
\begin{equation}\label{eq:Qt_cfpde}
    \frac{\partial \psi}{\partial t} -i  \left(\alpha(1 - e^{i(u+v)}) + \beta(1-e^{- iu}) \right)\frac{\partial \psi}{\partial u} = -\lambda^*(1 - e^{i(u+v)})\psi,
\end{equation}
with initial condition $\psi_{Q,N}(u,v,s,s) = e^{i u Q_s}$. One can then verify that Equation \eqref{eq:esep_char_fun} is indeed a solution to Equation \eqref{eq:Qt_cfpde}.
\end{proof}

\begin{remark}
Notice that, in Proposition \ref{CFQH}, we have chosen to present the results in terms of the activation number $Q_t$, instead of the intensity $\lambda_t$. This is because the activation number lives on the positive integers, making derivations easier. Obviously, due to Equation \eqref{eq:esep_inten}, moving from one process to the other is straightforward.
\end{remark}

\begin{remark}
Notice that, using Equation \eqref{eq:esep_inten}, we can trivially obtain the characteristic function of the pairs $(\lambda_t,Q_t)$ and $(Q_t,N_t^Q)$. Thus, Equation \eqref{eq:esep_char_fun} contains all the information about the Q-Hawkes process.
\end{remark}

The characteristic function plays a fundamental role in Fourier fast-pricing algorithms like the COS method. Hence, an analytic solution like Equation \eqref{eq:esep_char_fun} makes the Q-Hawkes process very suitable for this type of algorithms. Moreover, while it is not possible to integrate Equation \eqref{eq:esep_char_fun} with respect to $v$, we can integrate it with respect to $u$. Therefore, by setting $v=0$ in Equation \eqref{eq:esep_char_fun}, we are also able to obtain the PMF of the activation number.
\begin{prop}\label{prop:qt_dens}
    Let $Q_t$ be the activation number of a Q-Hawkes process with clustering rate $\alpha \in \mathbb{R}^+$, expiration rate $\beta \in \mathbb{R}^+\backslash\{0\}$, baseline intensity $\lambda^* \in \mathbb{R}^+$ and characteristic function given by Equation \eqref{eq:esep_char_fun}. Then, the PMF of the activation number $Q_t$ is represented by the following formula
    \begin{equation}\label{eq:ese_dens}
        \mathbb{P}[Q_t = x|Q_s] = \sum_{k=0}^x \binom{x-k+\frac{\lambda_s}{\alpha}-1}{x-k}\binom{Q_s}{k}p(\tau)^{\frac{\lambda_s}{\alpha}}(1-p(\tau))^{x-k}g(\tau)^{Q_s-k}(1-g(\tau))^k,
    \end{equation}
    with $\tau = t-s$, $p(t) = \frac{\beta-\alpha}{\beta-\alpha e^{(\alpha-\beta)t}}$ and $g(t) = \frac{\beta(1-e^{(\alpha-\beta)t})}{\beta-\alpha}$.
\end{prop}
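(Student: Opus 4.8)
The plan is to recover the PMF by inverting the (univariate) characteristic function of $Q_t$ that one obtains from Equation \eqref{eq:esep_char_fun} by setting $v=0$. As the remark after Proposition \ref{CFQH} notes, integrating \eqref{eq:esep_char_fun} with respect to $u$ is feasible, and since $Q_t$ takes values in the non-negative integers, the natural object to recover is the probability generating function (PGF) rather than the continuous Fourier inverse. Concretely, I would substitute $z = e^{iu}$ and $v=0$ into \eqref{eq:esep_char_fun} to obtain $G(z) \coloneqq \mathbb{E}[z^{Q_t}\mid Q_s]$ as an explicit rational/algebraic function of $z$; then $\mathbb{P}[Q_t = x \mid Q_s]$ is the coefficient of $z^x$ in the power-series expansion of $G$.

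The key simplification is that with $v=0$ the branch function collapses to $f(0) = \sqrt{(\beta+\alpha)^2 - 4\alpha\beta} = |\beta-\alpha|$, so the transcendental exponentials $e^{-\tau f(v)}$ become the ordinary scalars $e^{-\tau|\beta-\alpha|}$, which is exactly what produces the factors $e^{(\alpha-\beta)t}$ appearing in $p(t)$ and $g(t)$. First I would carry out this $v=0$ reduction carefully to see that the prefactor in \eqref{eq:esep_char_fun} becomes a term of the form $\bigl(p(\tau)/(1-(1-p(\tau))z)\bigr)^{\lambda^*/\alpha}$ up to bookkeeping, i.e. the PGF of a negative-binomial law with $\lambda^*/\alpha = (\lambda_s - \alpha Q_s)/\alpha$ successes (using \eqref{eq:esep_inten} to write $\lambda^* = \lambda_s - \alpha Q_s$), and that the $Q_s$-power factor becomes the PGF of a sum of $Q_s$ i.i.d.\ terms, each a mixture governed by $g(\tau)$ of a point mass and a further negative-binomial/geometric contribution. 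Recognizing $G(z)$ as a product of two such generating functions is the heart of the argument.

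Once $G(z)$ is written as the product of a negative-binomial PGF (with index $\lambda_s/\alpha$ and success probability $p(\tau)$) and the $Q_s$-fold PGF of the mixture distribution controlled by $g(\tau)$, the coefficient $[z^x]G(z)$ is extracted by the Cauchy product. The sum over $k=0,\dots,x$ in \eqref{eq:ese_dens} is the convolution index: $k$ counts the contribution from the $\mathrm{Binomial}(Q_s, 1-g(\tau))$-type factor (giving the $\binom{Q_s}{k} g(\tau)^{Q_s-k}(1-g(\tau))^k$ term), while $x-k$ is absorbed by the negative-binomial factor, producing the coefficient $\binom{x-k+\lambda_s/\alpha-1}{x-k} p(\tau)^{\lambda_s/\alpha}(1-p(\tau))^{x-k}$. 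I would verify the two success probabilities match the stated $p(\tau)$ and $g(\tau)$ by direct comparison after the $v=0$ substitution.

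The main obstacle I anticipate is the algebraic reshuffling in the second step: the raw expression obtained from \eqref{eq:esep_char_fun} at $v=0$ is not manifestly a product of two standard PGFs, and it will take some care with the $\tau$-dependent exponentials and the $g(u,0) = \beta + \alpha(1 - 2e^{iu})$ term to massage the two bracketed factors into the clean negative-binomial and binomial-mixture forms. I would isolate $e^{iu}$-dependence, clear the common denominator $f(v)+g(u,v)+e^{-\tau f(v)}(f(v)-g(u,v))$ at $v=0$, and match it term-by-term against the canonical $\bigl(1-(1-p)z\bigr)$ and $\bigl(g + (1-g)\,\phi(z)\bigr)$ structures; the correctness of the final binomial coefficients then follows from the generalized binomial series $\sum_{m\ge 0}\binom{m+s-1}{m} w^m = (1-w)^{-s}$. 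As a consistency check, setting $x<Q_s$ or $Q_s=0$ should collapse \eqref{eq:ese_dens} to the expected degenerate/pure negative-binomial cases.
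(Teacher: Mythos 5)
Your proposal is correct, and it isolates exactly the step the paper itself treats as the crux: at $v=0$ the characteristic function \eqref{eq:esep_char_fun} factors into a negative-binomial piece and a binomial piece, with the index subtlety $\lambda^*/\alpha + Q_s = \lambda_s/\alpha$ (via \eqref{eq:esep_inten}) handled correctly --- this is precisely the paper's Equation \eqref{eq:ese_charq}, which it asserts without showing the algebra you rightly flag as the main labor. Where you diverge is the inversion mechanism. The paper stays in Fourier language: it expands the binomial factor and inverts it with the Dirac delta identity \eqref{eq:dirac_def}, invokes the inverse Fourier transform of the negative-binomial characteristic function (justified by the branch-cut contour integral of Appendix \ref{app:comp_int}, since the exponent $\lambda_s/\alpha$ is non-integer), and assembles the result with the convolution theorem. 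You instead substitute $z=e^{iu}$, view the expression as a probability generating function, and extract $[z^x]G(z)$ by the Cauchy product together with the generalized binomial series $\sum_{m\ge 0}\binom{m+s-1}{m}w^m=(1-w)^{-s}$, valid here because $|1-p(\tau)|<1$ under the stability condition $\beta>\alpha$ the paper assumes throughout. Your route is the more elementary and self-contained one: uniqueness of power-series coefficients for $\mathbb{N}$-valued random variables makes it rigorous at no cost, with no contour integration or distributional delta manipulations. What the paper's heavier machinery buys is reuse: the Appendix \ref{app:comp_int} contour argument is exactly what gets adapted in Proposition \ref{prop:int_cf_mt}, where the analogous ``probability'' parameter $p(v,\tau)$ is complex and there is no genuine PGF to appeal to, so the Fourier-analytic derivation is set up once and recycled. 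One small quibble: your final consistency check for $x<Q_s$ is not meaningful --- since activations expire, $Q_t<Q_s$ occurs with positive probability and nothing degenerates there; only the $Q_s=0$ check (pure negative binomial) does what you claim.
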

\begin{proof}
First, we notice that, for $v=0$, Equation \eqref{eq:esep_char_fun} can be written as
\begin{equation}\label{eq:ese_charq}
    \mathbb{E}[e^{i u Q_t}|Q_s] = \left( \frac{p(\tau)}{1-(1-p(\tau))e^{iu}} \right)^{\frac{\lambda_s}{\alpha}}
    (e^{i u}(1-g(\tau))+g(\tau))^{Q_s}.
\end{equation}
The first term at the right-hand side of Equation \eqref{eq:ese_charq} is the characteristic function of a negative binomial distribution for the number of failures, given $\lambda_0/\alpha$ successes with probability $p(\tau)$, i.e.,
\begin{equation}\label{eq:neg_bin}
    NB\left(k;\frac{\lambda_0}{\alpha},p(\tau)\right) = \binom{k+\frac{\lambda_0}{\alpha}-1}{k}p(\tau)^{\frac{\lambda_0}{\alpha}} (1-p(\tau))^k.
\end{equation}

For the second term on the right-hand side of Equation \eqref{eq:ese_charq}, notice that $Q_s$ is a positive integer, so we can apply the binomial expansion. Now, the inverse Fourier transform becomes
\begin{equation}\label{eq:dirac_int}
\frac{1}{2\pi}\int_{-\infty}^{+\infty} \sum_{k=0}^{Q_s} \left(\binom{Q_s}{k} e^{i u k} g(\tau)^{Q_s-k}(1-g(\tau))^k\right)e^{-i u x}du = \binom{Q_s}{x} g(\tau)^{Q_s-x}(1-g(\tau))^x.
\end{equation}
Notice that, in Equation \eqref{eq:dirac_int}, we have used the definition of the Dirac delta function, which reads
\begin{equation}\label{eq:dirac_def}
    \delta(x-a) = \frac{1}{2\pi}\int_{-\infty}^{+\infty}e^{i u (x-a)}du.
\end{equation}
The final step is to apply the convolution theorem of the Fourier transform to Equations \eqref{eq:neg_bin} and \eqref{eq:dirac_int}, which immediately yields Equation \eqref{eq:ese_dens}.
\end{proof}

\begin{remark}
In the proof of Proposition \ref{prop:qt_dens}, we have used a known result about the negative binomial distribution. However, a detailed explanation on how to solve the complex integral can be necessary to follow the proof of Proposition \ref{prop:int_cf_mt} below. This is provided in Appendix \ref{app:comp_int} for completeness.
\end{remark}

\subsubsection{Characteristic function of the HQH model}

Since the characteristic function for the Heston model is known, it only remains to compute the last expectation in Equation \eqref{eq:hestonjd_cf_def} to obtain the characteristic function of the log-asset price. This is done in the following proposition.
\begin{prop}\label{prop:mt_char_ese}
    Let $Q_t$ be the activation number of a Q-Hawkes process with clustering rate $\alpha \in \mathbb{R}^+$, expiration rate $\beta \in \mathbb{R}^+\backslash\{0\}$, and baseline intensity $\lambda^* \in \mathbb{R}^+$. Moreover, let $Y$ be a random variable with support in $(-\infty,\infty)$ and characteristic function $\psi_Y(\cdot)$. We define the quantities
    \begin{equation*}
        f(v) = \sqrt{(\beta+\alpha(1+i v \bar{\mu}_Y))^2-4 \alpha \beta \psi_Y(v)},
    \end{equation*}
    and
    \begin{equation*}
        g(u,v) = \beta+\alpha(1+i v\bar{\mu}_Y - 2\psi_Y(v)e^{i u}).
    \end{equation*}
    Then, the characteristic function of $M_t$ and $Q_t$ is given by
    \begin{align}\label{eq:mt_char_ese}
         \mathbb{E}[e^{i u Q_t + i v (M_t-M_s)}|\mathcal{F}_s] =\, & \scalebox{1.2}{$e^{ \frac{\lambda^* \tau}{2 \alpha}(\beta-\alpha-i \alpha\bar{\mu}_Y v - f(v) )}$} \nonumber\\
        &\boldsymbol{\cdot} \left(\frac{2f(v)}{f(v)+g(u,v)+e^{-\tau f(v)}(f(v)-g(u,v))}\right)^{\frac{\lambda^*}{\alpha}} \nonumber \\
        &\boldsymbol{\cdot} \left(\frac{(1-e^{-\tau f(v)})(2\beta-e^{iu}(\beta+\alpha(1+i v\bar{\mu}_Y)))+e^{iu}f(v)(1+e^{-\tau f(v)})}{f(v)+g(u,v)+e^{-\tau f(v)}(f(v)-g(u,v))}\right)^{Q_s},
    \end{align}
    with $\tau = t-s$.
\end{prop}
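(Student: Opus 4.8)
The plan is to mirror the proof of Proposition~\ref{CFQH}: I will exhibit the first-order linear PDE satisfied by the joint characteristic function $\psi_{Q,M}(u,v,t,s) \coloneqq \mathbb{E}[e^{i(uQ_t + v(M_t - M_s))}|\mathcal{F}_s]$ and then verify that \eqref{eq:mt_char_ese} solves it. The starting point is that $(Q_t, M_t)$ is a time-homogeneous Markov process: by \eqref{eq:esep_inten} the intensity is $\lambda_t = \lambda^* + \alpha Q_t$, the process $N_t$ jumps with intensity $\lambda_t$ and simultaneously increments $Q_t$ by one and $M_t$ by an independent draw $Y$, while $N_t^Q$ jumps with intensity $\beta Q_t$ and decrements $Q_t$ by one; between jumps $M_t$ drifts deterministically at rate $-\bar\mu_Y\lambda_{t_-}$ coming from the compensator in \eqref{eq:mt}. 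Because the $Y_j$ are i.i.d.\ and independent of the counting processes, their effect enters only through $\psi_Y(v)$, which is the mechanism by which $e^{iv}$ from Proposition~\ref{CFQH} gets replaced.

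First I would write down the infinitesimal generator $\mathcal{A}$ of $(Q_t,M_t)$ acting on the test function $h(q,m) = e^{i(uq + vm)}$. The three contributions are: the $N_t$-jump term $(\lambda^* + \alpha q)\,\mathbb{E}_Y[h(q+1,m+Y) - h] = (\lambda^*+\alpha q)(e^{iu}\psi_Y(v) - 1)h$, where the independence of $Y$ turns the jump into the factor $\psi_Y(v)$; the $N_t^Q$-jump term $\beta q\,(e^{-iu}-1)h$; and the drift term $-\bar\mu_Y(\lambda^*+\alpha q)\,\partial_m h = -iv\bar\mu_Y(\lambda^*+\alpha q)h$. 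Taking $\partial_t\psi_{Q,M} = \mathbb{E}[\mathcal{A}h|\mathcal{F}_s]$ and using the affine identity $\mathbb{E}[Q_t e^{i(uQ_t+\cdots)}|\mathcal{F}_s] = -i\,\partial_u\psi_{Q,M}$ to convert every ``$\times q$'' into a $\partial_u$, I obtain
\[
\partial_t\psi_{Q,M} + i\big[\alpha(e^{iu}\psi_Y(v) - 1 - iv\bar\mu_Y) + \beta(e^{-iu}-1)\big]\partial_u\psi_{Q,M} = \lambda^*(e^{iu}\psi_Y(v) - 1 - iv\bar\mu_Y)\psi_{Q,M},
\]
with initial condition $\psi_{Q,M}(u,v,s,s) = e^{iuQ_s}$. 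As a sanity check, setting $\psi_Y(v) = e^{iv}$ and $\bar\mu_Y = 0$ collapses the bracket $e^{iu}\psi_Y(v) - 1 - iv\bar\mu_Y$ into $e^{i(u+v)} - 1$ and recovers exactly \eqref{eq:Qt_cfpde}, confirming that $M_t$ reduces to the uncompensated $N_t$ in that limit.

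The remaining step is verification: substitute the claimed \eqref{eq:mt_char_ese} into the PDE above. Because the generator is affine in $q$, the solution has the form $\log\psi_{Q,M} = A(\tau) + Q_s\,B(\tau)$, so substitution separates into a Riccati-type ODE for the $Q_s$-coefficient $B$ and a quadrature for $A$; the discriminant of that Riccati equation is precisely the stated $f(v)$, and matching the remaining coefficients produces $g(u,v)$ and the leading exponential. I would first confirm the boundary behaviour at $\tau=0$: the leading exponential tends to $1$, the $\lambda^*/\alpha$-power tends to $2f/(2f)=1$, and the $Q_s$-power tends to $2e^{iu}f/(2f)=e^{iu}$, so $\psi_{Q,M}(u,v,s,s)=e^{iuQ_s}$ as required. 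The main obstacle is the algebra of the verification itself — in particular showing that the novel compensator term $-iv\bar\mu_Y$ is exactly what shifts $\alpha\mapsto\alpha(1+iv\bar\mu_Y)$ inside the $(\beta+\alpha)^2$ part of the discriminant $f(v)$ and inserts the $-i\alpha\bar\mu_Y v$ summand in the leading exponential, while the replacement $e^{iv}\mapsto\psi_Y(v)$ accounts for the random jump sizes. I would handle this by direct differentiation, checking the $\partial_\tau$ and $\partial_u$ derivatives of each factor against the two separated ODEs; this is tedious but mechanical, and the structural parallel with Proposition~\ref{CFQH} guides every term.
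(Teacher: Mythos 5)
Your proposal is correct and follows essentially the same route as the paper's proof: establish the first-order transport PDE satisfied by $\psi_{Q,M}(u,v,t,s)$ (the paper invokes it by analogy with known arguments, whereas you derive it explicitly from the infinitesimal generator and the identity $\mathbb{E}[Q_t e^{iuQ_t+\cdots}] = -i\,\partial_u \psi$), impose the initial condition $\psi_{Q,M}(u,v,s,s)=e^{iuQ_s}$, and confirm that \eqref{eq:mt_char_ese} solves it by direct substitution, exactly as the paper does after sketching the method of characteristics. One point in your favor: your PDE carries the compensator term as $-iv\bar{\mu}_Y$, while the paper's Equation \eqref{eq:esepjd_cfpde} prints it as $v\bar{\mu}_Y$ without the factor $i$; your version is the one consistent with the stated solution (whose coefficients involve $\alpha(1+iv\bar{\mu}_Y)$, as your $\tau=0$ and $u=0$ checks confirm), so the discrepancy is a typo in the paper rather than a flaw in your argument.
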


\begin{proof}
First, we need the SDEs for the quantities of interest. The SDE for $Q_t$ is obtained from Equations \eqref{eq:esep_inten} and \eqref{eq:esep_inten_diff}. The SDE for $M_t$ is derived by direct differentiation, yielding
\begin{equation}
    dM_t = Y_t dN_t - \lambda_{t_-} dt.
\end{equation}
Next, it is not difficult to show, by similar arguments to those of \cite[Chapter~4]{mahmoud08}, that the characteristic function, $\psi_{Q,M}(u,v,t,s) \coloneqq \mathbb{E}[e^{iu Q_t + iv (M_t-M_s)}|\mathcal{F}_s]$, satisfies the following PDE
\begin{equation}\label{eq:esepjd_cfpde}
    \frac{\partial \psi}{\partial t} -i  \left(\alpha(1 - e^{iu}\psi_Y(v)+v \bar{\mu}_Y) + \beta(1-e^{-iu}) \right) \frac{\partial \psi}{\partial u} = -\lambda^*(1 - e^{iu}\psi_Y(v)+ v \bar{\mu}_Y)\psi,
\end{equation}
with initial condition $\psi_{Q,M}(u,v,s,s) = e^{i u Q_s}$.

We can solve Equation \eqref{eq:esepjd_cfpde} using the method of characteristics. Doing so gives us the following system of ODEs:
\begin{align*}
    \frac{\partial t(r,w)}{\partial w} &= 1, &t(r,0) &= s,\\
    \frac{\partial u(r,w)}{\partial w} &= -i(\alpha(1 - e^{iu}\psi_Y(v)+v \bar{\mu}_Y) + \beta(1-e^{-iu})),  &u(r,0) &= r,\\
    \frac{\partial \phi(r,w)}{\partial w} &= -\lambda^*(1 - e^{iu}\psi_Y(v)+ v \bar{\mu}_Y)\phi, &\phi(r,0) &= e^{i r Q_s}.
\end{align*}
This system can be solved analytically, and its solution gives the desired result of Equation \eqref{eq:mt_char_ese}. It can be verified by noticing that Equation \eqref{eq:mt_char_ese} is a solution to Equation \eqref{eq:esepjd_cfpde}.
\end{proof}

Subsequently, from Equation \eqref{eq:mt_char_ese} we obtain $\psi_{M}(v,t,s) = \psi_{Q,M}(0,v,t,s)$, which we can plug in Equation \eqref{eq:hestonjd_cf_def} to find the characteristic function of the log-asset price. In a similar way, it is possible to derive the joint characteristic function of the triplet $(X_t,V_t,Q_t)$. We can then use this function to price Bermudan options following the procedure from \cite{ruijter2012}. 

In Section \ref{sec:cos}, we shall see that, if available, some partial integrals of the characteristic function can be used to reduce the computational complexity of the COS method. Hence, the following result will prove useful in the rest of the paper.
\begin{prop}\label{prop:int_cf_mt}
    Let $Q_t$ be the activation number of a Q-Hawkes process with clustering rate $\alpha \in \mathbb{R}^+$, expiration rate $\beta \in \mathbb{R}^+\backslash\{0\}$, and baseline intensity $\lambda^* \in \mathbb{R}^+$. Let $M_t$ be the compensated jump term defined in Equation \eqref{eq:mt}, whose joint  characteristic function $\psi_{Q,M}(u,v,t,s)$ with $Q_t$ is given in Equation \eqref{eq:mt_char_ese}. The inverse Fourier transform of $\psi_{Q,M}(u,v,t,s)$ with respect to $u$ is given by
    \begin{equation}\label{eq:mt_char_int}
    \frac{1}{2\pi} \int_{-\infty}^{+\infty} \psi_{Q,M}(u,v,t,s) e^{-i u x} du = \sum_{k=0}^x I_1(x-k)I_2(k),
    \end{equation}
    where $I_1(\cdot)$ and $I_2(\cdot)$ are defined as 
    \begin{equation}\label{eq:It1}
    I_1(x) = \exp\left(\frac{\lambda^* \tau}{2 \alpha}(\beta-\alpha-i \alpha\bar{\mu}_Y v - f(v) )\right)\frac{(2 f(v))^{\scaleobj{1.1}{\frac{\lambda^*}{\alpha}}}}{h(v,\tau)^{\scaleobj{1.1}{\frac{\lambda^*}{\alpha}+Q_s}}} \binom{x+\frac{\lambda^*}{\alpha}+Q_s-1}{x}(1-p(v,\tau))^{x},
    \end{equation}
    and
    \begin{equation}\label{eq:It2}
    I_2(x) = \binom{Q_s}{x}(\hat{h}(v,\tau))^x(2\beta(1-e^{-\tau f(v)}))^{Q_s-x},
    \end{equation}
    with $\tau = t-s$, and with $f(v)$ defined as in Proposition \ref{prop:mt_char_ese}. Further,
    \begin{equation}
        h(v,\tau) = (1+e^{-t f(v)})f(v) + (1-e^{-t f(v)})(\beta+\alpha(1+i v \bar{\mu}_Y)),
    \end{equation}
    \begin{equation}
        \hat{h}(v,\tau) = (1+e^{-t f(v)})f(v) - (1-e^{-t f(v)})(\beta+\alpha(1+i v \bar{\mu}_Y)),
    \end{equation}    
    and
    \begin{equation}
        p(v,\tau) = \frac{1}{h(v,\tau)}\left((1+e^{-t f(v)})f(v) + (1-e^{-t f(v)})(\beta+\alpha(1+i v \bar{\mu}_Y-2\psi_Y(v))\right).
    \end{equation}
\end{prop}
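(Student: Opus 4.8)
The plan is to mirror the strategy used in the proof of Proposition \ref{prop:qt_dens}, exploiting the fact that $\psi_{Q,M}(u,v,t,s)$ depends on $u$ only through $z=e^{iu}$. Consequently its inverse Fourier transform in $u$ is supported on the nonnegative integers, and computing it amounts to extracting the power-series (equivalently, Fourier) coefficients in $z$. First I would algebraically simplify Equation \eqref{eq:mt_char_ese} by collecting the constant and $e^{iu}$ terms in the common denominator and in the numerator of the factor raised to $Q_s$. Since $g(u,v)$ is affine in $e^{iu}$, a direct computation shows that the denominator $f(v)+g(u,v)+e^{-\tau f(v)}(f(v)-g(u,v))$ reduces to $h(v,\tau)\bigl(1-(1-p(v,\tau))e^{iu}\bigr)$, while the numerator of the $Q_s$-power collapses to $\hat{h}(v,\tau)e^{iu}+2\beta(1-e^{-\tau f(v)})$. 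This is precisely the point where the definitions of $h$, $\hat{h}$ and $p$ are engineered to keep the bookkeeping clean.

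After this simplification, the $u$-dependence of $\psi_{Q,M}$ factors as a $u$-independent prefactor (the exponential times $(2f(v))^{\lambda^*/\alpha}h(v,\tau)^{-(\lambda^*/\alpha+Q_s)}$) multiplied by two pieces: the power $\bigl(1-(1-p(v,\tau))e^{iu}\bigr)^{-(\lambda^*/\alpha+Q_s)}$, which up to the normalising constant already absorbed into the prefactor is the characteristic function of a negative binomial law with $r=\lambda^*/\alpha+Q_s$ successes and success probability $p(v,\tau)$, and the finite polynomial $\bigl(\hat{h}(v,\tau)e^{iu}+2\beta(1-e^{-\tau f(v)})\bigr)^{Q_s}$. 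I would expand the first piece by the generalised binomial series $\sum_{n\ge0}\binom{n+r-1}{n}(1-p)^n e^{iun}$ and the second by the ordinary binomial theorem. Inverting each monomial $e^{iun}$ through the Dirac representation of Equation \eqref{eq:dirac_def} places a unit mass at $x=n$, and the convolution theorem then turns the product into the Cauchy product $\sum_{k=0}^{x}I_1(x-k)I_2(k)$, with $I_1$ carrying the prefactor and the negative-binomial coefficient of Equation \eqref{eq:It1} and $I_2$ the polynomial coefficient of Equation \eqref{eq:It2}. The sum truncates at $k=Q_s$ automatically because $\binom{Q_s}{k}$ vanishes for $k>Q_s$.

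The main obstacle is not the algebra but the justification of the generalised binomial expansion when $p(v,\tau)$ is complex: unlike in Proposition \ref{prop:qt_dens}, where $v=0$ and the probabilities are real and lie in $(0,1)$, here $p(v,\tau)$ and $1-p(v,\tau)$ are complex-valued functions of $v$ through $f(v)$ and $\psi_Y(v)$. The series $\sum_n\binom{n+r-1}{n}(1-p)^n e^{iun}$ converges and may be inverted term by term only when $|1-p(v,\tau)|<1$, which is exactly the integrability condition that renders the inverse transform well defined. I would therefore first record this condition and then invoke the complex-contour evaluation of $\frac{1}{2\pi}\int_{-\infty}^{+\infty}(1-(1-p)e^{iu})^{-r}e^{-iux}\,du$ developed in Appendix \ref{app:comp_int}, which legitimises replacing the real negative-binomial inversion used in Proposition \ref{prop:qt_dens} by its analytic continuation in the parameter $p$. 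With that integral in hand, the remaining steps are routine, and assembling the prefactors into $I_1$ and $I_2$ as in Equations \eqref{eq:It1}--\eqref{eq:It2} completes the argument.
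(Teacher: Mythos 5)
Your proposal is correct and structurally parallel to the paper's proof: the same collapse of the denominator into $h(v,\tau)\bigl(1-(1-p(v,\tau))e^{iu}\bigr)$ and of the numerator of the $Q_s$-power into $\hat{h}(v,\tau)e^{iu}+2\beta(1-e^{-\tau f(v)})$, the same convolution splitting into $I_1$ and $I_2$, and the same finite binomial expansion plus Dirac-delta inversion for $I_2$. The genuine difference lies in how the crux --- inverting $\bigl(1-(1-p(v,\tau))e^{iu}\bigr)^{-(\lambda^*/\alpha+Q_s)}$ when $p(v,\tau)$ is complex --- is justified. You expand in the generalised binomial series $\sum_{n\geq 0}\binom{n+r-1}{n}(1-p)^n e^{iun}$ with $r=\lambda^*/\alpha+Q_s$, valid for $|1-p(v,\tau)|<1$, and invert term by term; the paper instead keeps the contour-integral viewpoint: it builds the shifted contour of Figure \ref{fig:contour_integral2} around the now genuinely complex divergence point $u^*=i\log(1-p(v,\tau))$, changes variables $u'=u-\operatorname{Re}(u^*)$ so that the integrand involves the real number $|1-p(v,\tau)|\in[0,1]$ and the real-parameter computation of Appendix \ref{app:comp_int} applies verbatim, and then restores the complex power through $e^{-i\operatorname{Re}(u^*)x}=\bigl((1-p(v,\tau))/|1-p(v,\tau)|\bigr)^x$. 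Your series route is more elementary (no branch cuts, no contour bookkeeping) and makes explicit the condition $|1-p(v,\tau)|<1$, which the paper also needs but only implicitly, since its contour is only meaningful when the divergence point lies in the open lower half-plane, i.e.\ $\log|1-p(v,\tau)|<0$. The one weak spot is your closing appeal to ``analytic continuation in the parameter $p$'' of the appendix identity: both sides are distributional in $x$, the appendix treats only real $p\in(0,1)$, and the complex-parameter contour is constructed in the body of the paper's proof rather than in the appendix, so that continuation argument is not available off the shelf. It is also unnecessary --- your expansion plus termwise inversion already establishes the claim at the same level of rigour as the paper's own delta-function manipulations --- so you should simply drop that sentence and let the series argument carry the proof.
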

\begin{proof}
The procedure is similar to the proof of Proposition \ref{prop:qt_dens}. Using the convolution property of the Fourier transform, we can solve the following integrals separately
\begin{align}\label{eq:It1_def}
    I_1(x) \coloneqq & \, \exp\left(\frac{\lambda^* \tau}{2 \alpha}(\beta-\alpha-i \alpha\bar{\mu}_Y v - f(v) )\right)(2f(v))^{\scaleobj{1.1}{\frac{\lambda^*}{\alpha}}} \nonumber \\ 
    & \boldsymbol{\cdot} \frac{1}{2\pi}\int_{-\infty}^{+\infty} \frac{e^{-i u x}}{(f(v)+g(u,v)+e^{-\tau f(v)}(f(v)-g(u,v)))^{\scaleobj{1.1}{\frac{\lambda^*}{\alpha}+Q_s}}} du,
\end{align}
and 
\begin{equation}\label{eq:It2_def}
    I_2(x) \coloneqq \frac{1}{2\pi}\int_{-\infty}^{+\infty} \left(\hat{h}(v,\tau)e^{iu}+2\beta(1-e^{-\tau f(v)})\right)^{Q_s} e^{-i u x} du.
\end{equation}
Notice that we have rearranged Equation \eqref{eq:mt_char_ese}, so that the denominator appears only in Equation \eqref{eq:It1_def}. Then, Equation \eqref{eq:It2_def} can be integrated, resulting in Equation \eqref{eq:It2}.

The remaining step is to solve the integral in Equation \eqref{eq:It1_def}, and show that it equals Equation \eqref{eq:It1}. For that purpose, Equation \eqref{eq:It1_def} is rearranged in a more convenient way, leading to
\begin{align}\label{eq:It1_v2}
    I_1(x) = &\, \exp\left(\frac{\lambda^* \tau}{2 \alpha}(\beta-\alpha-i \alpha\bar{\mu}_Y v - f(v) )\right)\frac{(2 f(v))^{\scaleobj{1.2}{\frac{\lambda^*}{\alpha}}}}{h(v,\tau)^{\scaleobj{1.1}{\frac{\lambda^*}{\alpha}+Q_s}}} \nonumber \\ & \boldsymbol{\cdot} \frac{1}{2\pi}\int_{-\infty}^{+\infty} \left(\frac{1}{1-(1-p(v,\tau))e^{iu}}\right)^{\scaleobj{1.1}{\frac{\lambda^*}{\alpha}+Q_s}}e^{-i ux} du,
\end{align}
The integral in Equation \eqref{eq:It1_v2} resembles Equation \eqref{eq:comp_integral1} in Appendix \ref{app:comp_int}, which is the characteristic function of a negative binomial distribution. However, in this case $p(v,\tau)$ is a complex number\footnote{The reader is referred to Appendix \ref{app:comp_int} for a detailed derivation of the inverse Fourier transform of the characteristic function of a negative binomial distribution with a real probability parameter.}, so we must show that the integral in Equation \eqref{eq:It1_v2} is indeed a negative binomial distribution with a complex-valued ``probability'' parameter.

In Figure \ref{fig:contour_integral1} of Appendix \ref{app:comp_int}, we present an appropriate contour of integration to solve Equation \eqref{eq:It1_v2} when the divergence point $u^* = i \log(1-p(v,\tau))$ is negative and purely imaginary. In this case, since the divergence point $u^*$ is a complex number, an appropriate contour of integration is given in Figure \ref{fig:contour_integral2}, where we have assumed--without loss of generality--that $\operatorname{Re}(u^*)>0$.
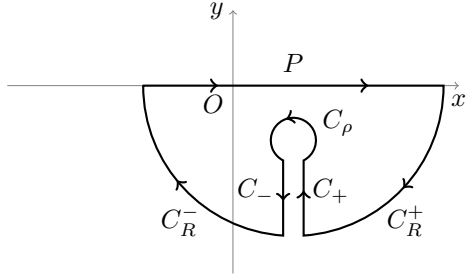
\begin{figure}[ht]
\centering
\begin{tikzpicture}
[decoration={markings,
mark=at position 1cm with {\arrow[line width=1pt]{>}},
mark=at position 3cm with {\arrow[line width=1pt]{>}},
mark=at position 5.5cm with {\arrow[line width=1pt]{>}},
mark=at position 7.6cm with {\arrow[line width=1pt]{>}},
mark=at position 8.9cm with {\arrow[line width=1pt]{>}},
mark=at position 10.14cm with {\arrow[line width=1pt]{>}},
mark=at position 12.24cm with {\arrow[line width=1pt]{>}}
}]
% The axes
\draw[help lines,->] (-3,0) -- (3,0) coordinate (xaxis);
\draw[help lines,->] (0,-2.5) -- (0,1) coordinate (yaxis);

% The path
\path[draw,line width=0.8pt,postaction=decorate] (-1.2,0) -- (2.8,0) arc(0:-86:2) -- +(0,1) arc(-63:243:0.3) -- +(0,-1) arc(-94:-180:2);

% The labels
\node[below] at (xaxis) {$x$};
\node[left] at (yaxis) {$y$};
\node[below left] {$O$};
\node at (0.8,0.3) {$P$};
\node at (1.3,-1.4) {$C_{+}$};
\node at (0.3,-1.4) {$C_{-}$};
\node at (-0.7,-1.8) {$C_{R}^{-}$};
\node at (2.3,-1.8) {$C_{R}^{+}$};
\node at (1.4,-0.5) {$C_{\rho}$};
\end{tikzpicture}    
    \caption{Integration domain for the integral in Equation \eqref{eq:It1_v2}. When integrating, we take limits $\rho \rightarrow 0$ for the radius of the inner arc, and $R \rightarrow \infty$ for the radius of the outer arc. The arc $C_{\rho}$ is centered around the divergence point $u^* = i \log(1-p(v,\tau))$, so that it lies outside the contour.}
    \label{fig:contour_integral2}
\end{figure}

Notice that the contour in Figure \ref{fig:contour_integral2} is the same as in Figure \ref{fig:contour_integral1}, but shifted by $\operatorname{Re}(u^*)$ on the real axis. This suggests the following change of variables: $u' = u - \operatorname{Re}(u^*)$. Equation \eqref{eq:It1_v2} then becomes
\begin{align}\label{eq:It1_v3}
    I_1(x) = &\, \exp\left(\frac{\lambda^* \tau}{2 \alpha}(\beta-\alpha-i \alpha\bar{\mu}_Y v - f(v) )\right)\frac{(2 f(v))^{\scaleobj{1.2}{\frac{\lambda^*}{\alpha}}}}{h(v,\tau)^{\scaleobj{1.1}{\frac{\lambda^*}{\alpha}+Q_s}}} \nonumber \\ & \boldsymbol{\cdot} \frac{1}{2\pi}\int_{-\infty}^{+\infty} \left(\frac{1}{1-|1-p(v,\tau)|e^{iu'}}\right)^{\scaleobj{1.1}{\frac{\lambda^*}{\alpha}+Q_s}}e^{-i (u' + \operatorname{Re}(u^*))x} du',
\end{align}
where $|\cdot|$ denotes the norm of a complex number. Therefore, since $0\leq |1-p(v,\tau)| \leq 1$ is a real number, we can relate the integral in Equation \eqref{eq:It1_v3} to the PMF of the negative binomial distribution. Finally, considering that 
\begin{equation}
    e^{-i \operatorname{Re}(u^*) x} = \left(\frac{1-p(v,\tau)}{|1-p(v,\tau)|}\right)^x,
\end{equation}
Equation \eqref{eq:It1_v3} immediately yields Equation \eqref{eq:It1}, completing the proof.
\end{proof}

\subsection{The Heston-Hawkes model}\label{sec:hawk}

When the jumps are given by the Hawkes process instead of the Q-Hawkes process, we recover the Heston-Hawkes (HH) model, that has been partially studied in \cite{ait-sahalia2015}, \cite{liu2019} and \cite{jing2021}, among others. Here we extend those studies by showing that the joint characteristic function of this model can be expressed as a system of ODEs, which needs to be solved numerically.

\subsubsection{The Hawkes process}

The Hawkes process is a multivariate point process first defined in the early seventies (\cite{Hawkes1971a},\cite{Hawkes1971b}), characterized by a stochastic intensity which is a linear function of past jumps. Mathematically, a one-dimensional\footnote{Since in this article we are only interested in the univariate Hawkes process, we describe only this particular case.} counting process $N_t$ is called a Hawkes process if its intensity is given by 
\begin{equation}\label{eq:hawkes_def}
    \lambda_t = \lambda^* + \int_{-\infty}^t \phi(t-s)\, dN_s,
\end{equation}
where $\lambda^*$ is the baseline intensity and $\phi(\cdot)$ is a positive and causal function of time belonging to the space of $L^1_{loc}$-integrable functions. Since $\phi(\cdot)$ determines the impact of each past jump through time, we refer to this function as the memory kernel from now on. The positivity of the memory kernel implies that the intensity increases with each jump, thus having a self-exciting or clustering effect. The more jumps occur, the more probable it is for them to appear again. 

In practice, the Hawkes process is defined so that this self-excitation does not cause its intensity to diverge, not even in an infinite amount of time. A common assumption that ensures the stationarity of the intensity process is that the memory kernel has an $L^1$-norm strictly smaller than 1. This common assumption is known in the literature as the {\it stability condition (H)}. 

As mentioned in the introduction, for most financial applications the memory kernel is assumed to be an exponential function of time. Under this setting, the Hawkes process can be shown to be a Markov process, and in particular an affine jump-diffusion process. The result is formalized in the following proposition.
\begin{prop}
    Consider the Hawkes process with exponential memory kernel $\phi(t) = \alpha e^{-\beta t}$, for $t,\alpha \in \mathbb{R}^+$ and $\beta \in \mathbb{R}^+\backslash\{0\}$. Taking\footnote{Notice that $t$ is only defined on the positive real line, while the integral in Equation \eqref{eq:hawkes_def} contains also the negative real line. Thus, following the literature, we appropriately define the negative part of the integral as in Equation \eqref{eq:hawk_neg_int}.}
    \begin{equation}\label{eq:hawk_neg_int}
        \int_{-\infty}^0 \alpha e^{\beta s}\, dN_s \coloneqq (\lambda_0-\lambda^*) e^{-\beta t},
    \end{equation}
    the couple $(N_t,\lambda_t)$ is a Markov process. Moreover, the intensity can be rewritten as
    \begin{equation}\label{eq:hawkes_diff}
        d\lambda_t = \beta(\lambda^*-\lambda_{t_{-}})\,dt + \alpha dN_t,
    \end{equation}
    with $t_{-} \coloneqq \lim_{\epsilon \rightarrow 0^+} t-\epsilon$ the moment right before $t$. Also, the stability condition in this model is satisfied iff $\beta>\alpha$.
\end{prop}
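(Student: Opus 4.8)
The plan is to start from the definition of the Hawkes intensity with exponential kernel, Equation \eqref{eq:hawkes_def} with $\phi(t)=\alpha e^{-\beta t}$, and reduce it to a purely forward expression by isolating the contribution of the jumps that occurred before time $0$. Splitting the integral at the origin and inserting the convention \eqref{eq:hawk_neg_int}, I would write
\begin{equation*}
\lambda_t = \lambda^* + (\lambda_0-\lambda^*)e^{-\beta t} + \alpha\int_0^t e^{-\beta(t-s)}\,dN_s = \lambda^* + e^{-\beta t}\Bigl[(\lambda_0-\lambda^*) + \alpha\, Z_t\Bigr],
\end{equation*}
where $Z_t \coloneqq \int_0^t e^{\beta s}\,dN_s$ is an $\mathcal{F}_t$-adapted pure-jump process. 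This already removes all dependence on the pre-$0$ history except through the single number $\lambda_0$, which is the first hint of the Markov property.

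For the SDE \eqref{eq:hawkes_diff}, I would differentiate the product $e^{-\beta t}Z_t$ using the integration-by-parts formula for semimartingales. Since $t\mapsto e^{-\beta t}$ is continuous and of finite variation, its quadratic covariation with $Z_t$ vanishes, so $d(e^{-\beta t}Z_t) = -\beta e^{-\beta t}Z_{t_-}\,dt + e^{-\beta t}\,dZ_t = -\beta e^{-\beta t}Z_{t_-}\,dt + dN_t$, using $dZ_t = e^{\beta t}\,dN_t$. Collecting the drift terms and recognizing that $\lambda^* + e^{-\beta t}[(\lambda_0-\lambda^*)+\alpha Z_{t_-}] = \lambda_{t_-}$ (again because the exponential factor is continuous), the deterministic part reduces to $-\beta(\lambda_{t_-}-\lambda^*)\,dt$, which yields exactly \eqref{eq:hawkes_diff}. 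The only point requiring care here is keeping the left limit $Z_{t_-}$, and hence $\lambda_{t_-}$, in the drift, which is precisely what the integration-by-parts formula delivers.

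The Markov property of $(N_t,\lambda_t)$ then follows from the structure of \eqref{eq:hawkes_diff}: between consecutive jumps of $N$ the intensity evolves autonomously and deterministically according to $\dot{\lambda} = \beta(\lambda^*-\lambda)$, jumps of $N$ arrive at the state-dependent rate $\lambda_t$, and at each such jump the state is updated by the fixed map $(\lambda,n)\mapsto(\lambda+\alpha,n+1)$. Because the jump rate, the deterministic flow, and the post-jump map all depend on the past only through the current value $(N_t,\lambda_t)$ and not explicitly on $t$, the pair is a time-homogeneous (piecewise-deterministic) Markov process; I would make this rigorous by exhibiting its infinitesimal generator acting on test functions $h(n,\lambda)$ as $\mathcal{A}h = \beta(\lambda^*-\lambda)\,\partial_\lambda h + \lambda\bigl(h(n+1,\lambda+\alpha)-h(n,\lambda)\bigr)$, which depends only on the state.

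Finally, the stability condition (H) is a one-line computation: the memory kernel is supported on $\mathbb{R}^+$, so
\begin{equation*}
\|\phi\|_{L^1} = \int_0^\infty \alpha e^{-\beta t}\,dt = \frac{\alpha}{\beta},
\end{equation*}
and the requirement $\|\phi\|_{L^1}<1$ is equivalent to $\alpha/\beta<1$, that is, $\beta>\alpha$. The main obstacle in the whole argument is not any single computation but the bookkeeping of left limits in the jump calculus of the second step; once the SDE is in hand, both the Markov property and the stability characterization are essentially immediate.
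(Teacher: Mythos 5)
Your proof is correct and follows essentially the same route as the paper's (very terse) proof: direct differentiation of the kernel representation, using the convention \eqref{eq:hawk_neg_int}, to obtain the SDE \eqref{eq:hawkes_diff}, and integration of $\phi$ to get $\|\phi\|_{L^1}=\alpha/\beta$ for the stability condition. Your treatment is simply more careful---the semimartingale integration by parts with left limits, and the explicit piecewise-deterministic generator for the Markov property (a point the paper's proof leaves entirely implicit)---so it fills in details the paper omits rather than diverging from it.
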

\begin{proof}
Given Equation \eqref{eq:hawk_neg_int}, Equation \eqref{eq:hawkes_diff} is obtained by direct differentiation of Equation \eqref{eq:hawkes_def}. The verification of the stability condition is obtained by the integration of the memory kernel.
\end{proof}

Since it is an AJD process, its characteristic function can be obtained by numerically solving a system of ODEs, derived from the results of \cite{duffie2000}. Such a system for the Hawkes process was obtained in \cite{errais2010}, to where we refer for those formulas.

\subsection{Characteristic function of the HH model}

Analogously to what we did for the Q-Hawkes process, we derive an expression for the characteristic function of $M_t$ in order to price options via the COS method. While a closed-form expression for this function is not available, in the next proposition we show that it can be obtained by solving numerically a system of ODEs.

\begin{prop}
     Let $\lambda_t$ be the intensity of a Hawkes process with clustering rate $\alpha \in \mathbb{R}^+$, expiration rate $\beta \in \mathbb{R}^+\backslash\{0\}$, and baseline intensity $\lambda^* \in \mathbb{R}^+$. Moreover, let $Y$ be a random variable with support in $(-\infty,\infty)$ and characteristic function $\psi_Y(\cdot)$. Then, the characteristic function of the pair $(\lambda_t,M_t)$ is given by
     \begin{equation}\label{eq:MtH_cf}
         \mathbb{E}[e^{i u \lambda_t + i v (M_t-M_s)}|\mathcal{F}_s] = e^{A(u,v,s) + B(u,v,s)\lambda_s },
     \end{equation}
     where the functions $A(u,v,\tau)$ and $B(u,v,\tau)$ satisfy the following system of ODEs
     \begin{align}
         \frac{\partial A(u,v,\tau)}{\partial \tau} &= \beta \lambda^* B(u,v,\tau), \\
         \frac{\partial B(u,v,\tau)}{\partial \tau} &= e^{\alpha B(u,v,\tau)}\psi_Y(v) -\beta B(u,v,\tau) - i v \bar{\mu}_Y - 1,
     \end{align}
     and the set of boundary conditions $A(u,v,t) = 0$, $B(u,v,t) = i u$.
\end{prop}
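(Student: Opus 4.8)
The plan is to exploit the fact that, with an exponential memory kernel, the pair $(\lambda_t, M_t)$ is a time-homogeneous affine jump-diffusion, so that the transform must be exponential-affine in the state, and to derive the governing ODEs from the infinitesimal generator in the spirit of \cite{duffie2000} and \cite{errais2010}. First I would assemble the dynamics of the two components: the intensity follows $d\lambda_t = \beta(\lambda^*-\lambda_{t_-})\,dt + \alpha\,dN_t$ from \eqref{eq:hawkes_diff}, while direct differentiation of \eqref{eq:mt} gives $dM_t = Y_t\,dN_t - \bar{\mu}_Y\lambda_{t_-}\,dt$. Since only the increment $M_t-M_s$ enters the transform, $M_s$ does not appear in the generator and the conditional expectation depends on the current state only through $\lambda_s$, which already suggests the affine form $e^{A + B\lambda_s}$.

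Next I would set up the martingale that drives the computation. Fixing $u$, $v$ and the horizon $t$, define $\mu_r := \mathbb{E}[e^{iu\lambda_t + iv(M_t-M_s)}\mid\mathcal{F}_r]$ for $s\le r\le t$, which is a martingale by the tower property. By the Markov property of $(\lambda_t,M_t)$ I would posit $\mu_r = e^{iv(M_r-M_s)}\exp\!\big(A(u,v,t-r) + B(u,v,t-r)\lambda_r\big)$ and apply the Itô formula for jump-diffusions. Writing $\tau = t-r$, the generator acts on a test function $h(\lambda,m)$ as
\[
\mathcal{A}h = \beta(\lambda^*-\lambda)\,\partial_\lambda h \;-\; \bar{\mu}_Y\lambda\,\partial_m h \;+\; \lambda\,\mathbb{E}_Y\big[h(\lambda+\alpha,\,m+Y)-h(\lambda,m)\big],
\]
where the jump term encodes that $N_t$ fires at rate $\lambda$, simultaneously shifting $\lambda$ by $\alpha$ and $m$ by $Y$. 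Imposing the martingale condition (zero $dt$-drift) and dividing by $\mu_r$ produces the single scalar identity $-A_\tau - B_\tau\lambda + \beta(\lambda^*-\lambda)B - iv\bar{\mu}_Y\lambda + \lambda(\psi_Y(v)e^{\alpha B}-1)=0$.

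The closing step is to separate this identity by powers of $\lambda$: because every contribution is either constant or linear in $\lambda$, the $\lambda^0$-coefficient yields $A_\tau = \beta\lambda^* B$ and the $\lambda^1$-coefficient yields the Riccati-type equation $B_\tau = \psi_Y(v)e^{\alpha B} - \beta B - iv\bar{\mu}_Y - 1$, exactly the stated system. Evaluating $\mu_r$ at $r=t$ (i.e.\ $\tau=0$), where $\mu_t = e^{iu\lambda_t + iv(M_t-M_s)}$, forces the boundary conditions $A(u,v,0)=0$ and $B(u,v,0)=iu$.

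I expect the main obstacle to be the careful treatment of the jump component: one must pass to the compensated Poisson integral so that the intensity $\lambda$ factors out of the jump term cleanly and the resulting expression stays affine in $\lambda$, and one must check that the $M_t$-compensator contributes precisely the $-iv\bar{\mu}_Y\lambda$ term that merges with the $-\lambda$ from the jump term into the $-iv\bar{\mu}_Y-1$ piece of the $B$-equation. A secondary technical point is upgrading the local-martingale argument to a genuine martingale and confirming the transform is finite, which is where the regularity hypotheses of the affine framework of \cite{duffie2000} enter. In contrast to the Q-Hawkes case of Proposition \ref{prop:mt_char_ese}, the nonlinearity $e^{\alpha B}$ prevents an elementary closed-form solution of the $B$-ODE, so the system must be integrated numerically.
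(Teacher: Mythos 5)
Your proposal follows essentially the same route as the paper's proof: it uses the infinitesimal generator of the pair $(\lambda_t,M_t)$, the martingale property of the conditional expectation, and the exponential-affine ansatz of \cite{duffie2000} and \cite{errais2010}, then matches coefficients in powers of $\lambda$ to obtain the two ODEs and reads off the boundary conditions at $\tau=0$. Your write-up is in fact slightly more careful than the paper's (the explicit martingale $\mu_r$ with the prefactor $e^{iv(M_r-M_s)}$, and the consistent time-to-maturity convention for the ODE signs), but it is the same argument.
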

\begin{proof}
The proof goes along the same lines as the proof of Proposition 2.2 in \cite{errais2010}. First, notice that the infinitesimal generator $\mathcal{D}$ of the pair $(\lambda_t,M_t)$ is given by
\begin{equation}
    (\mathcal{D}g)(\lambda,M) = \beta(\lambda^*-\lambda)g_{\lambda}(\lambda,M)- \lambda\, \bar{\mu}_Y g_M(\lambda,M) + \lambda \int (g(\lambda+\alpha,M+x)-g(\lambda,M))d\nu(x),
\end{equation}
where $g_z$ denotes the partial derivative of $g$ with respect to $z$, and $\nu(\cdot)$ is the cumulative distribution function (CDF) of the random jump size $Y$.

Then, define $\psi_{\lambda,M}(u,v,t,s) \coloneqq \mathbb{E}[e^{i u \lambda_t + i v (M_t-M_s)} |\mathcal{F}_s]$, for $s\leq t$. Observe that, by definition, $\psi_{\lambda,M}(u,v,t,s)$ is a martingale with respect to the filtration generated by $\lambda_t$ and $M_t$. By the properties of the infinitesimal generator (see \cite{errais2010}), this implies
\begin{equation}\label{eq:hawk_infgen}
    \begin{dcases}
            \frac{\partial \psi_{\lambda,M}}{\partial s}(u,v,t,s) + (\mathcal{D}\psi_{\lambda,M})(u,v,t,s) = 0, \\ 
            \psi_{\lambda,M}(u,v,t,t) = e^{i u \lambda_t + i v M_t}
	\end{dcases}
\end{equation}
To finalize the proof, we point out the fact that, since the pair $(\lambda_t,M_t^H)$ is an AJD process, we can use the results from \cite{duffie2000} and assume
\begin{equation}\label{eq:hawk_duffie}
    \psi_{\lambda,M}(u,v,t,s) = e^{A(u,v,s) + B(u,v,s)\lambda_s}.
\end{equation}
Substituting Equation \eqref{eq:hawk_duffie} into Equation \eqref{eq:hawk_infgen} yields the desired result.
\end{proof}

\section{COS method for self-exciting jumps}\label{sec:cos}

We describe the methodology to price European and Bermudan options with self-exciting processes. As mentioned in the introduction, the method of choice is the COS method, first developed in \cite{fang2008}, due to the availability of the characteristic function of the Q-Hawkes process. For the pricing of European options using the COS method, we refer to \cite{fang2008} for all the details. For the case of Bermudan options, in this paper we adopt an approach based on \cite{fang2011}, with some adaptations to include the jump process. 

\begin{remark}
The COS method also allows for a fast computation of the Greeks, such as $\Delta$ and $\Gamma$ (see \cite{fang2008}). Due to the properties of the general jump-diffusion model presented in Section \ref{sec:general}, this also applies to the HQH and HH models. Other sensitivities with respect to the model parameters can be obtained by partial differentiation of either the characteristic function of the Heston model or Equation \eqref{eq:mt_char_ese}. 
\end{remark}

\subsection{COS method for pure-jump processes}\label{sec:cos_jump}

The COS method makes use of the cosine transform to expand the density function as a linear combination of cosine functions, whose coefficients depend on the characteristic function of the variable of interest. However, for a pure-jump process, the concept of density cannot be applied, due to the discontinuities of the probability distribution. Instead, the PMF must be used. However, since the PMF is not a continuous function by definition, its cosine transform is also not well-defined. Hence, in the following we employ the discrete cosine transform (DCT) in order to define a COS method for discrete random variables.

\begin{definition}
Let $X$ be a discrete random variable in $\mathbb{N}$, and define $p_X(n) \coloneqq \mathbb{P}[X = n]$, $n\in \mathbb{N}$. Then, the first $N$ values---i.e. $(p_X(0),\dots,p_X(N-1))$---can be expressed in terms of the cosine series expansion
\begin{equation}\label{eq:DCT}
     p_X(n) = \sideset{}{'}\sum_{k=0}^{N-1}  A_k \cos\left(k\pi\frac{2n+1}{2N}\right),
\end{equation}
where the $A_k$'s
\begin{equation}
    A_k = \frac{2}{N}\sum_{n=0}^{N-1}p_X(n)\cos\left(k\pi\frac{2n+1}{2N}\right),
\end{equation}
are the DCT of the $p_X(n)$'s and $\sideset{}{'}\sum$ means that the first term in the summation is multiplied by a half.
\end{definition}
In analogy with the COS method, we approximate $A_k$ by 
\begin{align}\label{eq:DCT_COS}
    A_k &\simeq \frac{2}{N}\operatorname{Re}\left(\sum_{n=0}^{\infty}p_X(n)e^{i\frac{k\pi}{N}(n+\frac{1}{2})}\right) \nonumber \\
    &= \frac{2}{N}\operatorname{Re}\left(\psi_X\left(\frac{k\pi}{N}\right)e^{i\frac{k\pi}{2N}} \right) \nonumber \\
    &\eqqcolon \hat{A}_k,
\end{align}
where $\psi_X(\cdot)$ is the characteristic function of the discrete random variable $X$. 

Therefore, by plugging Equation \eqref{eq:DCT_COS} into Equation \eqref{eq:DCT}, we can approximate the PMF of $X$ in terms of its characteristic function through the DCT.

\begin{remark}
The same procedure generalizes to multidimensional distributions. Just observe that, if the distribution encompasses both continuous and discrete random variables, then we apply the cosine transform to the continuous random variables and the DCT to the discrete ones.
\end{remark}

\subsubsection{Error analysis}

It is of interest to understand the scope of the numerical error that we are incurring with the discrete COS (DCOS) method. Notice that, due to the orthogonality properties of the cosine functions, Equation \eqref{eq:DCT} is exact. Therefore, the only source of error originates from approximating $A_k$ by $\hat{A}_k$. Clearly,
\begin{equation}
    \hat{A}_k-A_k = \frac{2}{N}\sum_{m=N}^{\infty}p_X(m)\cos\left(k\pi\frac{2m+1}{2N}\right).
\end{equation}
And, denoting by $\hat{p}_X(n)$ the COS approximation of $p_X(n)$,
\begin{equation}\label{eq:COS_err1}
    \hat{p}_X(n) - p_X(n) = \frac{2}{N} \sideset{}{'}\sum_{k=0}^{N-1}\sum_{m=N}^{\infty}p_X(m)\cos\left(k\pi\frac{2m+1}{2N}\right)\cos\left(k\pi\frac{2n+1}{2N}\right).
\end{equation}
Due to the orthogonality properties of the cosine functions, Equation \eqref{eq:COS_err1} can be greatly simplified. In particular, we can write, without loss of generality, $m = lN + j$, with $l \in \mathbb{N}\backslash\{0\}$ and $j \in \{0,1,\dots,N-1\}$. Introducing this in Equation \eqref{eq:COS_err1} and changing the order of summation gives
\begin{equation}
    \hat{p}_X(n) - p_X(n) = \frac{2}{N} \sum_{l=1}^{\infty}\sum_{j=0}^{N-1}p_X(lN+j)\sideset{}{'}\sum_{k=0}^{N-1}\cos\left(k\pi\frac{2(lN+j)+1}{2N}\right)\cos\left(k\pi\frac{2n+1}{2N}\right).
\end{equation}
Now it is easy to see that, if $l$ is even
\begin{equation}
    \sideset{}{'}\sum_{k=0}^{N-1}\cos\left(k\pi\frac{2(lN+j)+1}{2N}\right)\cos\left(k\pi\frac{2n+1}{2N}\right) =
    \left\{\begin{array}{ll}
         0, &  \text{if }j\neq n,\\
         \\
         N/2, & \text{if }j = n.
    \end{array}\right.
\end{equation}
And when $l$ is odd,
\begin{equation}
    \sideset{}{'}\sum_{k=0}^{N-1}\cos\left(k\pi\frac{2(lN+j)+1}{2N}\right)\cos\left(k\pi\frac{2n+1}{2N}\right) =
    \left\{\begin{array}{ll}
         0, &  \text{if }j\neq N-1-n,\\
         \\
         N/2, & \text{if }j = N-1-n.
    \end{array}\right.
\end{equation}
Combining these results with Equation \eqref{eq:COS_err1} leads to
\begin{equation}\label{eq:COS_err2}
    \hat{p}_X(n) - p_X(n) = \sum_{l=1}^{\infty}\left(p_X(2lN+n)+p_X(2lN-1-n)\right).
\end{equation}
From Equation \eqref{eq:COS_err2} we deduce two insights. First, since the PMF is nonnegative, $\hat{p}_X(n)\geq p_X(n)$. Thus, the DCOS method always overestimates the actual probabilities. Second, the order of convergence of the DCOS method depends on the decay rate of $p_X(\cdot)$ with $N$. When we are close to the tail of the distribution, the decay rate depends on the differentiability\footnote{Obviously, a PMF does not have a well-defined derivative of any order, since its argument is discrete. Thus, when we say differentiability of a PMF, we mean how many derivatives are nonzero assuming the argument is continuous.} of $p_X(\cdot)$. If the first $k$ derivatives are nonzero, then the order of convergence is $\mathcal{O}(N^{-k-1})$. Moreover, for infinitely differentiable functions, the order of convergence is exponential, i.e. $\mathcal{O}(e^{-\gamma k^r})$, for some $\gamma,r>0$.

We illustrate this with two examples: the discrete uniform distribution and the Poisson distribution. For the discrete uniform distribution in $\{0,1,\dots,M\}$, we have that
\begin{equation}
    p_X^{uniform}(n) = \left\{
    \begin{array}{ll}
        1/(M+1), &  \text{if }n\leq M,\\
        0, & \text{Otherwise.}
    \end{array}\right.
\end{equation}
Thus, the expected order of convergence for $n\leq M$ is $\mathcal{O}(N^{-1})$, since the first derivative is already zero everywhere. This can indeed be observed in Figure \ref{fig:COS_err_uni}, where we show the approximation error as a function of $N$ for $M = 500$ and $n = 24$.

On the other hand, the Poisson distribution is infinitely differentiable, thus we expect an exponential order of convergence for large enough $N$. This is confirmed in Figure \ref{fig:COS_err_poi}, where we plot the error in log-scale for the case $\lambda = 15 $ and $n = 12$. Indeed, the error decays exponentially until it reaches the order of magnitude of the machine error, from where it flattens. 

\begin{figure}
    \centering
    \begin{subfigure}{0.4\textwidth}
    \centering
    \includegraphics[width=\textwidth]{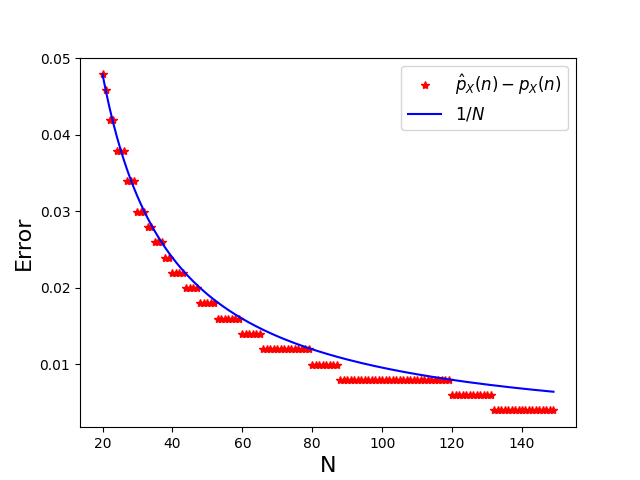}
    \caption{Error for discrete uniform distribution.}
    \label{fig:COS_err_uni}
    \end{subfigure}
    \begin{subfigure}{0.4\textwidth}
    \centering
    \includegraphics[width=\textwidth]{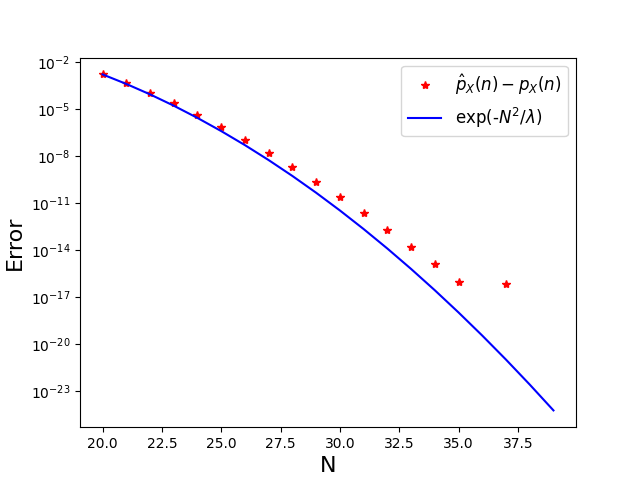}
    \caption{Log-error for Poisson distribution.}
    \label{fig:COS_err_poi}
    \end{subfigure}
    \caption{Numerical analysis of the order of convergence of the discrete COS method for the discrete uniform and Poisson distributions. Notice how the order of convergence for both cases is $\mathcal{O}(N^{-1})$ and $\mathcal{O}(e^{-N^2/\lambda})$, respectively, as expected from the differentiability of these distributions.}
\end{figure}

\subsection{Reducing dimensionality}\label{sec:cos_dim}

In \cite{fang2011}, the authors developed an adaption of the COS method to price Bermudan options under the Heston model. Given the nice analytical properties of the variance process--which follows a CIR process (e.g, \cite{oosterlee2019})--they could reduce the complexity of the COS method from $\mathcal{O}(N_1 N_2)$ to $\mathcal{O}(N_1)$, where $N_1$ and $N_2$ are the number of terms in the COS expansion of the log-asset price $X_t$ and the variance $V_t$, respectively. More precisely, since $f(X_t,V_t) = f(X_t|V_t)\cdot f(V_t)$, and the density of the variance is known, the problem is reduced to applying the COS expansion to the conditional density function of $X_t$.

However, this conditional density depends on past {\it and} current values of the variance. That is, $f(X_t,V_t|X_s,V_s) = f(X_t|X_s,V_s,V_t)\cdot f(V_t|V_s)$. Therefore, the conditional density depends on two values of the variance at two different times $s$ and $t$. This suggests that the dynamics of the ``bridge'' version of that process\footnote{That is, the dynamics of a stochastic process when both its initial and final states are fixed.} must be known first, and only then we can calculate the corresponding characteristic function. This is indeed what has been done for the Heston model. In \cite{broadie2006} they obtain the characteristic function of $X_t$, given $V_s$ and $V_t$, based on previous research from \cite{pitman1982} about Bessel bridges. Of course, the research about Bessel bridges is much broader and deeper than just a result about its characteristic function. Nevertheless, in the following we show that there is a simpler way to reduce the complexity of the COS method, given some analytical properties of the joint characteristic function of the processes involved.

For simplicity, assume that we want to estimate the joint density\footnote{Here we assume that the density exists, i.e. we are dealing with continuous random variables. However, the results from this section also apply for the estimation of the PMF of a discrete random variable. In that case, we simply use the discrete COS method described in Section \ref{sec:cos_jump}.} of two stochastic processes $X_t$ and $Y_t$, which we denote by $f_{X,Y}(X_t,Y_t)$. Let's further assume that the characteristic function $\psi_{X,Y}(u,v) = \mathbb{E}[e^{i u X_t + i v Y_t}]$ is known. Then, $f_{X,Y}(X_t,Y_t)$ can be approximated via the COS method as
\begin{equation}
    f_{X,Y}(X_t,Y_t) \approx \sideset{}{'}\sum_{k_x=0}^{N_x-1}\sideset{}{'}\sum_{k_y=0}^{N_y-1} A_{k_x,k_y}\cos\left(k_x \pi \frac{X_t-a_x}{b_x-a_x} \right)\cos\left(k_y \pi \frac{Y_t-a_y}{b_y-a_y} \right),
\end{equation}
with 
\begin{equation}
    A_{k_x,k_y} = \frac{2}{b_x-a_x}\frac{2}{b_y-a_y}\iint_{\mathbb{R}^2} f_{X,Y}(x,y)\cos\left(k_x \pi \frac{x-a_x}{b_x-a_x} \right)\cos\left(k_y \pi \frac{y-a_y}{b_y-a_y} \right) dx dy.
\end{equation}
The coefficients $A_{k_x,k_y}$ can be written in terms of the characteristic function $\psi_{X,Y}$ (see \cite{ruijter2012}), but what is relevant for us is that this algorithm has complexity $\mathcal{O}(N_x N_y)$. Next, we demonstrate how this complexity can be reduced to $\mathcal{O}(N_x)$ under certain conditions. 

\begin{prop}\label{prop:cos_reduction}
     Let $X_t$ and $Y_t$ be two stochastic processes defined on $\mathbb{R}^2$, with joint density function denoted by $f_{X,Y}(X_t,Y_t)$. Assume that the characteristic function $\psi_{X,Y}(u,v) = \mathbb{E}[e^{i u X_t + i v Y_t}]$ is known. Moreover, the integral $\int_{\mathbb{R}}\psi_{X,Y}(u,v)e^{-i v y} dv$ is also known. Then, $f_{X,Y}(X_t,Y_t)$ can be approximated via the COS method as
     \begin{equation}\label{eq:cos_reduced}
    f_{X,Y}(X_t,Y_t) \approx \sum_{k_x=0}^{N_x-1} \tilde{A}_{k_x}(Y_t)\cos\left(k_x \pi \frac{X_t-a_x}{b_x-a_x}\right),
    \end{equation}
    with 
    \begin{equation}\label{eq:cos_red_coeff}
    \tilde{A}_{k_x}(Y_t) = \frac{2}{b_x-a_x}\frac{1}{2\pi}\operatorname{Re}\left(e^{-i\frac{k_x\pi a_x}{b_x-a_x}}\int_{\mathbb{R}}\psi_{X,Y}\left(\frac{k_x \pi}{b_x-a_x},v\right) e^{- i v Y_t} dv\right).
    \end{equation}
\end{prop}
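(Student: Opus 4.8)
The plan is to avoid expanding $f_{X,Y}$ in both coordinates and instead to fix $y = Y_t$, treating $x \mapsto f_{X,Y}(x,y)$ as a univariate function to which the one-dimensional COS expansion of \cite{fang2008} is applied in the $x$-direction only. This immediately collapses the double sum of the full two-dimensional COS method into the single sum of \eqref{eq:cos_reduced}; the entire content of the proposition is then to show that the $x$-direction cosine coefficients, which ordinarily require a characteristic function in $x$, can be written exactly through the partial inverse Fourier transform of $\psi_{X,Y}$ in $v$ --- precisely the quantity assumed available.

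First I would write the one-dimensional COS expansion of $f_{X,Y}(\cdot,y)$ on the truncation interval $[a_x,b_x]$,
\begin{gather*}
    f_{X,Y}(x,y) \approx \sideset{}{'}\sum_{k_x=0}^{N_x-1} A_{k_x}(y)\cos\left(k_x\pi\frac{x-a_x}{b_x-a_x}\right), \\
    A_{k_x}(y) = \frac{2}{b_x-a_x}\int_{a_x}^{b_x} f_{X,Y}(x,y)\cos\left(k_x\pi\frac{x-a_x}{b_x-a_x}\right)dx.
\end{gather*}
Following the standard derivation in \cite{fang2008}, I would then extend the integration range to all of $\mathbb{R}$ (the induced truncation error being negligible for a suitably chosen $[a_x,b_x]$, which also explains why the statement is an approximation) and rewrite the cosine as the real part of a complex exponential, obtaining
\begin{equation*}
    A_{k_x}(y) \approx \frac{2}{b_x-a_x}\operatorname{Re}\left(e^{-i\frac{k_x\pi a_x}{b_x-a_x}}\int_{\mathbb{R}} f_{X,Y}(x,y)\,e^{i\frac{k_x\pi}{b_x-a_x}x}\,dx\right).
\end{equation*}

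The crux of the argument is to identify the inner integral with the prescribed partial transform. Writing $\phi(u,y) := \int_{\mathbb{R}}f_{X,Y}(x,y)e^{iux}\,dx$ for the partial Fourier transform of the density in $x$, I would use Fubini's theorem to factor the characteristic function as $\psi_{X,Y}(u,v) = \int_{\mathbb{R}}\phi(u,y)e^{ivy}\,dy$; that is, $\psi_{X,Y}(u,\cdot)$ is the Fourier transform of $\phi(u,\cdot)$ in the $y$-variable. By Fourier inversion in $v$ this gives
\begin{equation*}
    \int_{\mathbb{R}}f_{X,Y}(x,y)\,e^{iux}\,dx = \phi(u,y) = \frac{1}{2\pi}\int_{\mathbb{R}}\psi_{X,Y}(u,v)\,e^{-ivy}\,dv.
\end{equation*}
Substituting $u = \frac{k_x\pi}{b_x-a_x}$ and $y = Y_t$ turns the approximation for $A_{k_x}(y)$ exactly into the expression $\tilde{A}_{k_x}(Y_t)$ of \eqref{eq:cos_red_coeff}, which upon insertion into the expansion yields \eqref{eq:cos_reduced}.

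I expect the only delicate point to be the interchange of integrals and the application of Fourier inversion in the displayed identity for $\phi(u,y)$: these require the absolute convergence of the double integral $\iint |f_{X,Y}(x,y)|\,dx\,dy$ and of $\int_{\mathbb{R}}\psi_{X,Y}(u,v)e^{-ivy}\,dv$. This is exactly where the hypothesis that the partial integral $\int_{\mathbb{R}}\psi_{X,Y}(u,v)e^{-ivy}\,dv$ is known (and hence well-defined) enters and does the real work; once that identity is granted, everything else is the routine truncation-plus-real-part manipulation already justified for the one-dimensional COS method.
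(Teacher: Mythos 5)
Your proposal is correct and follows essentially the same route as the paper: expand the cosine series in the $x$-variable only, and identify the resulting coefficient $\tilde{A}_{k_x}(Y_t)$ with the partial inverse Fourier transform $\frac{1}{2\pi}\int_{\mathbb{R}}\psi_{X,Y}\left(\frac{k_x\pi}{b_x-a_x},v\right)e^{-i v Y_t}\,dv$. The only difference is in how the key identity is justified: where the paper writes $f(x,Y_t)$ as a full two-dimensional inverse transform and collapses the $u$-integral with a formal Dirac-delta identity, you obtain the same identity by Fubini plus one-dimensional Fourier inversion in $v$, which is a cleaner and more rigorous treatment of the same step (and you also make explicit the interval-truncation and integrability caveats that the paper leaves implicit).
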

\begin{proof}
We approximate $f_{X,Y}$ by expanding the cosine series only in $X$. That is, we apply Equation \eqref{eq:cos_reduced}, where
\begin{equation}
    \tilde{A}(Y_t) = \frac{2}{b_x-a_x}\int_{\mathbb{R}}f(x,Y_t)\cos\left(k_x \pi \frac{x-a_x}{b_x-a_x} \right) dx.
\end{equation}
Now the goal is to write $\tilde{A}_{k_x}(Y_t)$ in terms of $\psi_{X,Y}(u,v)$. First, recalling the definition of the inverse Fourier transform 
\begin{equation}
    \tilde{A}_{k_x}(Y_t) = \frac{2}{b_x-a_x}\frac{1}{(2\pi)^2}\int_{\mathbb{R}}\cos\left(k_x \pi \frac{x-a_x}{b_x-a_x} \right)\left(\iint_{\mathbb{R}^2} \psi_{X,Y}(u,v) e^{-i u x - i v Y_t} du dv\right)dx.
\end{equation}
Next, we change the order of integration, substitute $\cos(x) = \operatorname{Re}(e^{ix})$ and apply the definition of the Dirac delta function from Equation \eqref{eq:dirac_def}. Thus,
\begin{equation}
    \tilde{A}_{k_x}(Y_t) = \frac{2}{b_x-a_x}\frac{1}{2\pi}\operatorname{Re}\left(e^{-i\frac{k_x\pi a_x}{b_x-a_x}}\iint_{\mathbb{R}^2}\psi_{X,Y}(u,v) e^{- i v Y_t} \delta\left(u-\frac{k_x \pi}{b_x-a_x}\right)du dv\right).
\end{equation}
Finally, due to the properties of the Dirac delta function, we obtain Equation \eqref{eq:cos_red_coeff}, completing the proof.
\end{proof}
\begin{remark}
When $k_x=0$ in Equation \eqref{eq:cos_red_coeff}, the integral in $\tilde{A}_0(Y_t)$ is simply the marginal density of $Y_t$. Thus, a requirement to reduce the dimensionality of the COS method is that some of the marginal densities can be computed analytically. Such a requirement may sound trivial, as in case both integrals are available---with respect to $u$ and $v$---the joint density $f_{X,Y}$ has a closed-form solution and there is no need to use the cosine expansion. Proposition \ref{prop:cos_reduction} provides a simple way to exploit such analytical properties when only a subset of variables admits an analytical density function.
\end{remark}

\subsection{COS method for Bermudan options with jumps}\label{sec:cos_ber}
In this section, we use the COS method to derive the pricing formula for Bermudan options under the Heston jump-diffusion models defined in Section \ref{sec:jump_diff}. We mainly extend the methodology of \cite{fang2011} to cope with an extra dimension, given in this case by the jump process. 

In \cite{fang2011}, the analytical properties of the Heston model were used to reduce the dimensionality of the cosine expansion. Therefore, only the cosine expansion of the log-asset price was needed. In this article, we also deal with the jump process, which could add an extra dimension. This is particularly the case for the Hawkes process, since its characteristic function can only be computed numerically. Regarding the Q-Hawkes process, we have seen in Section \ref{sec:jump_diff} that the characteristic function of the compensated jump term can be integrated. Thus, we can use the results from Section \ref{sec:cos_dim} to further reduce the dimensions of the COS method. In particular, for the HQH model, the COS method is applied only to the log-asset price, in a similar way as in \cite{fang2011}. On the other hand, for the HH model, we must apply the cosine expansion for the log-asset price and the jump intensity $\lambda_t$. In the remainder of this section, we briefly describe how to price Bermudan options via the COS method, emphasizing the differences between the HQH and HH processes.

Assume the option allows for $M$ exercise opportunities at times $\mathcal{T} \coloneqq \{t_m, t_m<t_{m+1}|m = 0,1, \dots, M\}$, with $t_0$ corresponding to the beginning of the contract and $t_M = T$ corresponding to maturity. Furthermore, assume for simplicity that the exercise times are equally spaced between $t_0$ and $t_M$. That is, $\Delta t \coloneqq t_{m+1}-t_{m}$ is constant. Denoting by $\bm{Y}_m \coloneqq (X_{t_m},V_{t_m},\lambda_{t_m})$ the triplet containing the values of the log-asset price, variance and jump intensity at time $t_m$, respectively, the value of the Bermudan option at $t_m$ can be written as
\begin{equation}
    v(\bm{Y}_m,t_m) = 
    \left\{ \begin{array}{ll}
 g(X_{t_m},t_m) & m = M,\\
 \\
\max(c(\bm{Y}_m,t_m),g(X_{t_m},t_m)) & m = 1,\dots,M-1, \\
\\
 c(\bm{Y}_m,t_m) & m = 0,\end{array} \right.
\end{equation}
where $g(X_t,t)$ is the payoff of the option at time $t$ and $c(\bm{Y},t)$ is the continuation value.

Due to the Markov property, the continuation value is given by
\begin{equation}
    c(\bm{Y}_m,t_m) = e^{-r\Delta t}\mathbb{E}_{t_m}[c(\bm{Y}_{m+1},t_{m+1})],
\end{equation}
where $\mathbb{E}_t[\cdot] \coloneqq \mathbb{E}[\cdot|\mathcal{F}_t]$ is the expectation conditioned on the filtration at time $t$. Writing the expectation in integral form, the continuation value becomes
\begin{equation}\label{eq:cont_val}
    c(\bm{Y}_m,t_m) = e^{-r \Delta t}\iiint_{\mathbb{R}^3} v(x,V,\lambda,t_{m+1}) f(x,V,\lambda|\bm{Y}_m) dx dV d\lambda,
\end{equation}
with $f(\cdot|\bm{Y}_m)$ the joint density\footnote{Notice that, for the Q-Hawkes process, we use the PMF instead of the density function for the intensity $\lambda_t$.} function conditioned on the values of the processes at time $t_m$. For the HQH model, the density\footnote{For the HQH process, we prefer to write the density as a function of the activation number $Q_t$, since it makes derivations easier. However, the change is trivial due to Equation \eqref{eq:esep_inten}. That also means the integral over $\lambda$ is replaced by a summation over $Q$.} can be approximated with the COS method as
\begin{equation}\label{eq:dens_QHJD}
    f(x,V,Q|\bm{Y}_{m}) \simeq \sideset{}{'}\sum_{k=0}^{N_x-1} A_k(V,Q|\bm{Y}_{m}) \cos\left(k\pi \frac{x-a_x}{b_x-a_x} \right),
\end{equation}
where
\begin{equation}\label{eq:Ak_qhjd}
    A_k(V,Q|\bm{Y}_m) = \frac{2}{b_x-a_x}\operatorname{Re}\left(e^{i k \pi \frac{x_m-a_x}{b_x-a_x}}\Psi_{Q}\left(\frac{k \pi}{b_x-a_x},Q,Q_m\right)\Psi_{V}\left(\frac{k \pi}{b_x-a_x},V,V_m\right)\right),
\end{equation}
\begin{equation}\label{eq:int_cf_Q}
    \Psi_{Q}(u,Q,Q_m) = \frac{1}{2\pi}\int_{\mathbb{R}}\psi_{Q,M}(v,u|Q_m)e^{-i v Q} dv,
\end{equation}
and 
\begin{equation}\label{eq:int_cf_V}
    \Psi_{V}(u,V,V_m) = \frac{1}{2\pi}\int_{\mathbb{R}}\psi_{Heston}(u,w|X_m=0,V_m)e^{-i w V}dw.
\end{equation}
Notice that Equations \eqref{eq:int_cf_Q} and \eqref{eq:int_cf_V} are known analytically. Equation \eqref{eq:int_cf_Q} was derived in Proposition \ref{prop:int_cf_mt}, while Equation \eqref{eq:int_cf_V} is the same as Equation (30) in \cite{fang2011}.

Plugging Equation \eqref{eq:dens_QHJD} into Equation \eqref{eq:cont_val} yields
\begin{align}\label{eq:cont_val2}
    c(\bm{Y}_m,t_m) \simeq e^{-r \Delta t}\operatorname{Re}&\left(
    \sideset{}{'}\sum_{k=0}^{N_x-1}e^{i k \pi \frac{x_m-a_x}{b_x-a_x}}\sum_{Q=0}^{\infty}\Psi_{Q}\left(\frac{k \pi}{b_x-a_x},Q,Q_m\right)\int_0^{\infty}\Psi_{V}\left(\frac{k \pi}{b_x-a_x},V,V_m\right)\right. \nonumber \\
    &\phantom{(\sideset{}{'}\sum_{k=0}^{\infty}}\boldsymbol{\cdot}\left.\left(\int_{-\infty}^{+\infty}v(x,V,Q,t_{m+1})\cos\left(k\pi \frac{x-a_x}{b_x-a_x} \right)dx\right) dV\right).
\end{align}
The innermost integral in Equation \eqref{eq:cont_val2}---i.e. the integral of the log-asset price---can be further split into two integrals following the procedure of \cite{fang2011}, which can be solved analytically for the case of European put and call options. Finally, truncating the summation over $Q$ and applying a quadrature rule to the integral over $V$ gives 
\begin{align}\label{eq:cont_val3}
    c(\bm{Y}_m,t_m) \simeq e^{-r \Delta t}\operatorname{Re}&\left(
    \sideset{}{'}\sum_{k=0}^{N_x-1}e^{i k \pi \frac{x_m-a_x}{b_x-a_x}}\sum_{Q=0}^{N_Q-1}\Psi_{Q}\left(\frac{k \pi}{b_x-a_x},Q,Q_m\right)\sum_{j=1}^{N_V}\omega_j\Psi_{V}\left(\frac{k \pi}{b_x-a_x},v_j,V_m\right)\right. \nonumber \\
    &\phantom{(\sideset{}{'}\sum_{k=0}^{\infty}}\boldsymbol{\cdot}\left.\left(\int_{-\infty}^{+\infty}v(x,v_j,Q,t_{m+1})\cos\left(k\pi \frac{x-a_x}{b_x-a_x} \right)dx\right) dV\right),
\end{align}
where $\{v_j\}_{j=1}^{N_V}$ are the quadrature nodes and $\{\omega_j\}_{j=1}^{N_V}$ the quadrature weights. 

In order to compute the prices of Bermudan options, we evaluate $c(\bm{Y}_m,t_m)$ via Equation  \eqref{eq:cont_val3} recursively for $m=M,M-1,\dots,0$. For Bermudan put and call options, we can take advantage of FFT methods to compute the integral of the log-asset price, resulting in a computational complexity of $\mathcal{O}(M N_x\log(N_x) N_Q^2 N_V^2)$ for the recursive loop. For more complex payoffs whose integral cannot be solved analytically, the computational complexity is given by $\mathcal{O}(M N_x^2 N_Q^2 N_V^2)$.

Following a similar procedure for the HH model, we can also obtain a numerical approximation for the continuation value. The differences are that the intensity is continuous---so the integral needs to be numerically approximated---and the characteristic function is not known analytically, so the COS method is applied to both the log-asset price and the jump intensity. Applying these changes gives us the formula for the continuation value under the HH process as follows:
\begin{align}\label{eq:cont_val_H}
    c(\bm{Y}_m,t_m) \simeq e^{-r \Delta t}\operatorname{Re}&\left(
    \sideset{}{'}\sum_{k=0}^{N_x-1}e^{i k \pi \frac{x_m-a_x}{b_x-a_x}}\sum_{j_H=1}^{N_H}\omega_{j_H}\Psi^H_{\lambda}\left(\frac{k \pi}{b_x-a_x},\lambda_{j_H},\lambda_m\right)\sum_{j_V=0}^{N_V}\omega_{j_V}\Psi_{V}\left(\frac{k \pi}{b_x-a_x},v_{j_V},V_m\right)\right. \nonumber \\
    &\phantom{(\sideset{}{'}\sum_{k=0}^{\infty}}\boldsymbol{\cdot}\left.\left(\int_{-\infty}^{+\infty}v(x,v_{j_V},\lambda_{j_H},t_{m+1})\cos\left(k\pi \frac{x-a_x}{b_x-a_x} \right)dx\right) dV\right),
\end{align}
with
\begin{align}\label{eq:int_cf_H}
    \displaystyle
    \Psi_{\lambda}^H(u,\lambda,\lambda_m) = \frac{1}{b_H-a_H} \sideset{}{'}\sum_{k_H=0}^{N_{\lambda}-1}& \cos\left(k_H\pi \frac{\lambda-a_H}{b_H-a_H} \right)\nonumber \\
    & \boldsymbol{\cdot} \left(\psi_{M,\lambda}^H\left(u,\frac{k_H\pi}{b_H-a_H}|\lambda_m\right)e^{-i\frac{k_H\pi a_H}{b_H-a_H}}+\psi_{M,\lambda}^H\left(u,-\frac{k_H\pi}{b_H-a_H}|\lambda_m\right)e^{i\frac{k_H\pi a_H}{b_H-a_H}}\right),
\end{align}
and $\{\lambda_{j_H},\omega_{j_H}\}_{j_H=1}^{N_H}$ the quadrature nodes and weights for the intensity, respectively.

\begin{remark}
Comparing Equations \eqref{eq:cont_val3} and \eqref{eq:cont_val_H}, it is clear that the main difference is the extra COS expansion in the jump intensity for the Hawkes process. The system of ODEs that yields the characteristic function in Equation \eqref{eq:int_cf_H} needs to be solved for every combination of $k_x$, $\pm k_H$, $\lambda_{j_H}$ and $\lambda_m$, giving a total of  $2 \times N_x \times N_{\lambda} \times N_H^2$ evaluations. In comparison, Equation \eqref{eq:int_cf_Q} does not require solving any system of ODEs numerically and needs to be evaluated $N_x \times N_Q^2$ times.
\end{remark}
\section{Numerical results}\label{sec:results}

In this section, we perform pricing experiments for European and Bermudan options using the methodology derived in the previous sections. For all options we compare the prices of the HQH and HH models. Moreover, we also compare the results with the Bates model \cite{bates1996}, since this corresponds to the case in which the jumps are independent and there is no self-excitation. In this way, it is possible to benchmark the impact of self-excitation on the option prices.

For all derivatives, we consider two scenarios, \textit{Scenario A} and \textit{Scenario B}, whose parameters' settings are given in Table \ref{tab:parameters}. Notice that most of the parameters are the same in both cases, except for those related to the jump process. This is done in order to study the impact of the different forms of self-exciting behaviour on the option prices. Furthermore, we assume that the log-jump sizes are normally distributed with mean $\mu_Y$ and standard deviation $\sigma_Y$. 

\begin{table}[ht]
  \caption{Parameter setting for each scenario. Notice that only some of the parameters related to the jump process change between scenarios, leaving the diffusion component unchanged.}
  \centering
  \begin{threeparttable}
    \begin{tabular}{cccccccccccccc}\midrule\midrule 
      & $\alpha$ & $\beta$ & $\lambda^*$ & $Q_0$ & $\mu_Y$ & $\sigma_Y$ & $S_0$ & $r$ & $V_0$ & $\kappa$ & $\theta$ & $\eta$ & $\rho$ \tnote{*}\\ \cmidrule(l r){1-14}
        \textit{Scenario A} & $2$ & $3$ & $1.1$ & $2$ & $-0.3$ & $0.4$ & $9$ & $0.1$ & $0.0625$ & $5$ & $0.16$ & $0.9$ & $0.1$\\ \cmidrule(l r){1-14}
     %\cmidrule(l r){1-3}
     \textit{Scenario B} & $2.9$ & $3$ & $1.1$ & $2$ & $0.3$ & $0.4$ & $9$ & $0.1$ & $0.0625$ & $5$ & $0.16$ & $0.9$ & $0.1$\\
      \midrule\midrule
    \end{tabular}

  \begin{tablenotes}
  \item[*] Refer to Equation \eqref{eq:esepjd_sde} for the meaning of each parameter.
  \end{tablenotes}
  \end{threeparttable}
  \label{tab:parameters}
\end{table}

For the sake of comparison, we assume the same setting for the HQH and HH processes. This is possible because both models have the same number of parameters and even the same interpretations. Thus, for the Hawkes process we set $\lambda_0 = \lambda^* + \alpha Q_0$. On the other hand, comparing with the Bates model is not so straightforward, because the set of parameters does not match. Since the Bates model has a constant jump intensity, we choose it so that the expectation of the number of jumps in one year is the same as for the other models.

Before moving to the pricing experiments, in Figure \ref{fig:density} we present the density functions for the HQH and HH models. This also helps us understanding the results in derivative pricing, given that they clearly rely on the distribution of the underlying. We observe that, in Scenario \textit{A}, the distributions are considerably similar. Therefore, it is expected that the European option prices will also be similar. On the other hand, in Scenario \textit{B} we observe larger differences, with the HQH model putting larger weights to the left side of the distribution. Further experiments, not shown here for reasons of space, indicate that the main contributor to the differences between Figures \ref{fig:density_A} and \ref{fig:density_B} is the mean jump size $\mu_Y$, with the clustering rate $\alpha$ having a smaller impact than that of $\mu_Y$ on these discrepancies. 

\begin{figure}[ht]
    \centering
    \begin{subfigure}{0.4\textwidth}
    \centering
    \includegraphics[width=\textwidth]{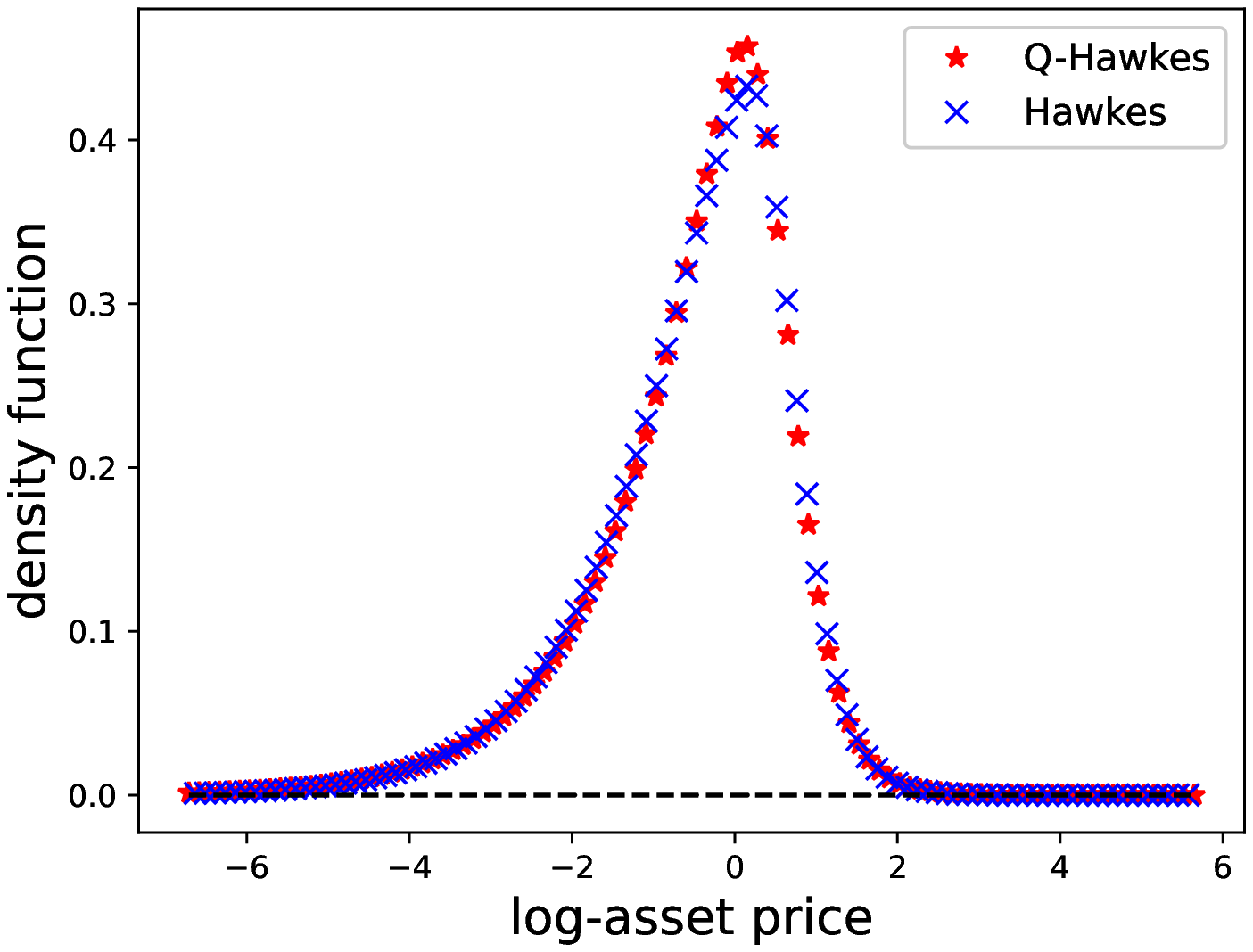}
    \caption{Scenario \textit{A}.}
    \label{fig:density_A}
    \end{subfigure}
    \begin{subfigure}{0.4\textwidth}
    \centering
    \includegraphics[width=\textwidth]{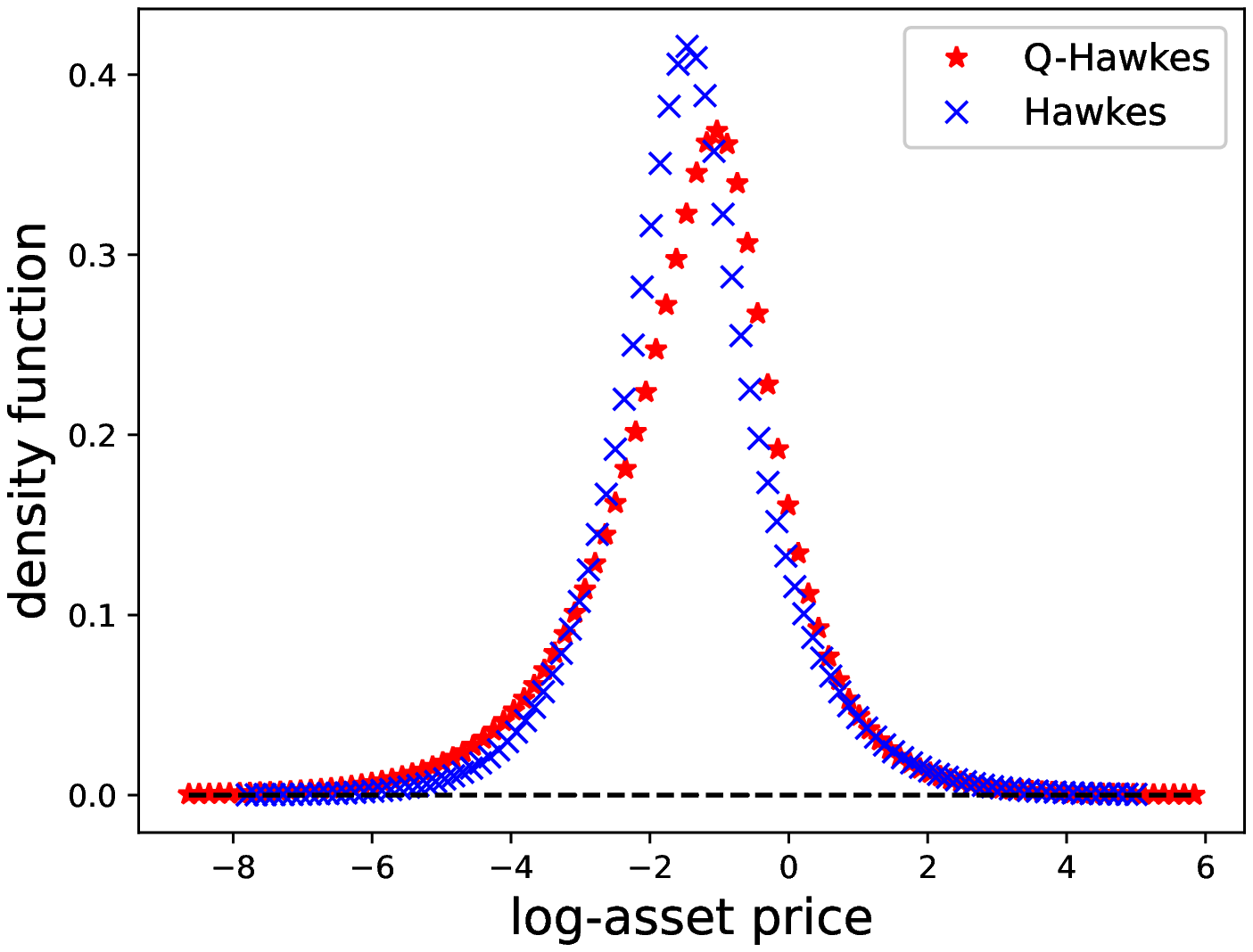}
    \caption{Scenario \textit{B}.}
    \label{fig:density_B}
    \end{subfigure}
    \caption{Comparison of the log-asset price density function at time $T = 1$, recovered via the COS method for the HQH and HH models in Scenarios \textit{A} and \textit{B}.}
    \label{fig:density}
\end{figure}

\begin{remark}
The code was implemented in Python (version 3.8.10), optimized through the library for scientific computing NumPy. Experiments were run in an Intel(R) Core(TM) i7-7700HQ CPU @ 2.80GHz processor.
\end{remark}

\subsection{Pricing European options}

We start with the pricing of European put options. As stated before, we compare the prices for the HQH, HH and Bates models under Scenarios \textit{A} and \textit{B}. The comparison is done in terms of the Black-Scholes implied volatility, since this quantity is sensitive to variations in the option price, specially for out-of-the-money options. 

In Figures \ref{fig:strike} and \ref{fig:maturity}, we report the implied volatility prices for a chosen set of expiry dates $T$ and strike prices $K$ under all models for Scenarios \textit{A} and \textit{B}, respectively. Clearly, the results are very different under both scenarios. In scenario \textit{A}, the Bates model always gives the highest implied volatility, although the differences with respect to the other models change significantly with respect to strike and maturity. On the other hand, in scenario \textit{B} we observe some regions where the Bates model gives the highest volatilities, but also others where the opposite happens. It is interesting to notice that the HH model always yields higher implied volatilities than the HQH model, regardless of the scenario we consider. These differences are, of course, much smaller in scenario \textit{A}, as expected by looking at the densities in Figure \ref{fig:density}. Notwithstanding these quantitative discrepancies, the qualitative behaviour is almost the same in both models, producing a broader range of volatility smiles than the Bates model. 

\begin{figure}[ht]
    \centering
    \begin{subfigure}{0.4\textwidth}
    \centering
    \includegraphics[width=\textwidth]{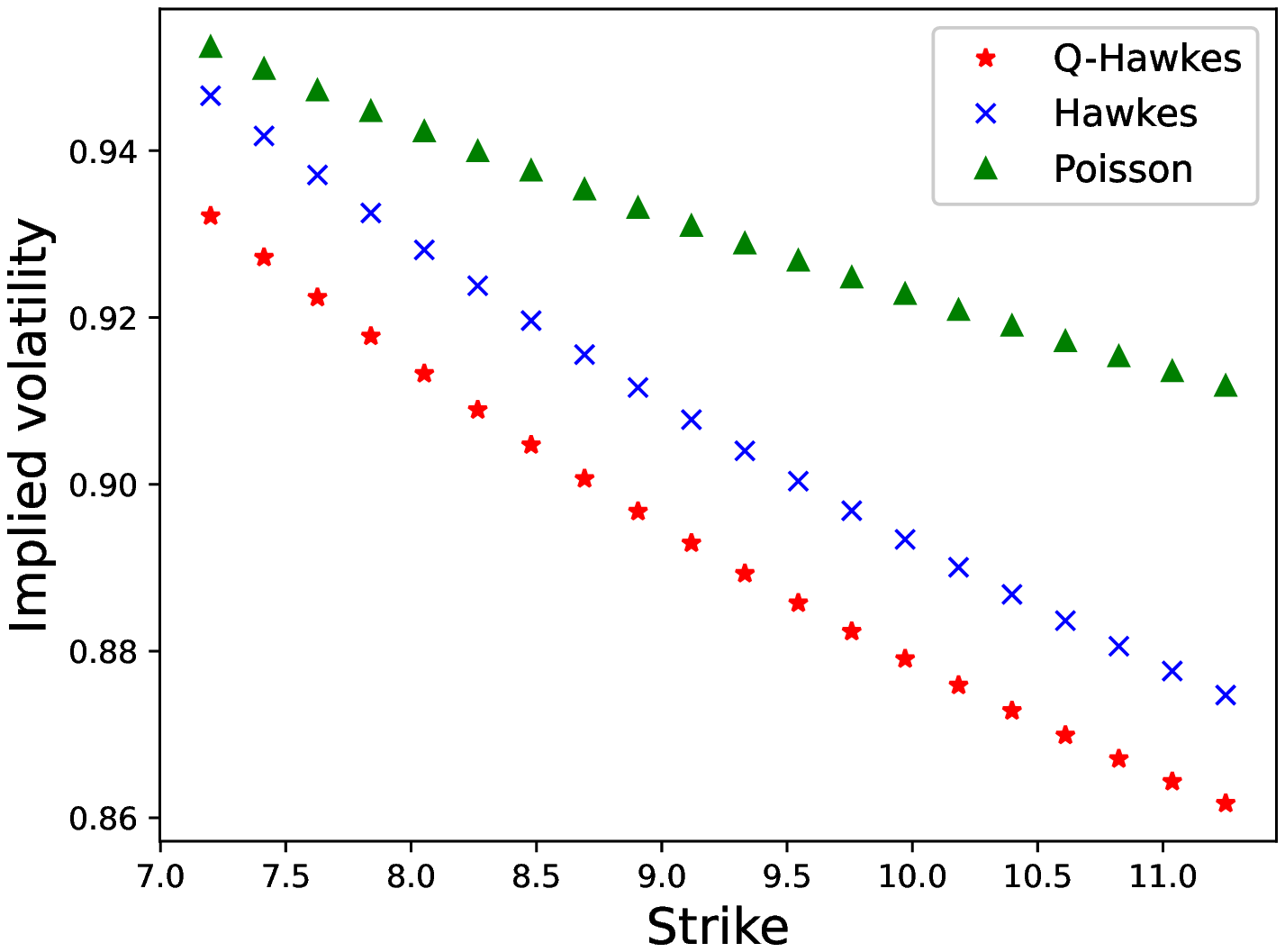}
    \caption{Scenario \textit{A}.}
    \label{fig:strike_A}
    \end{subfigure}
    \begin{subfigure}{0.4\textwidth}
    \centering
    \includegraphics[width=\textwidth]{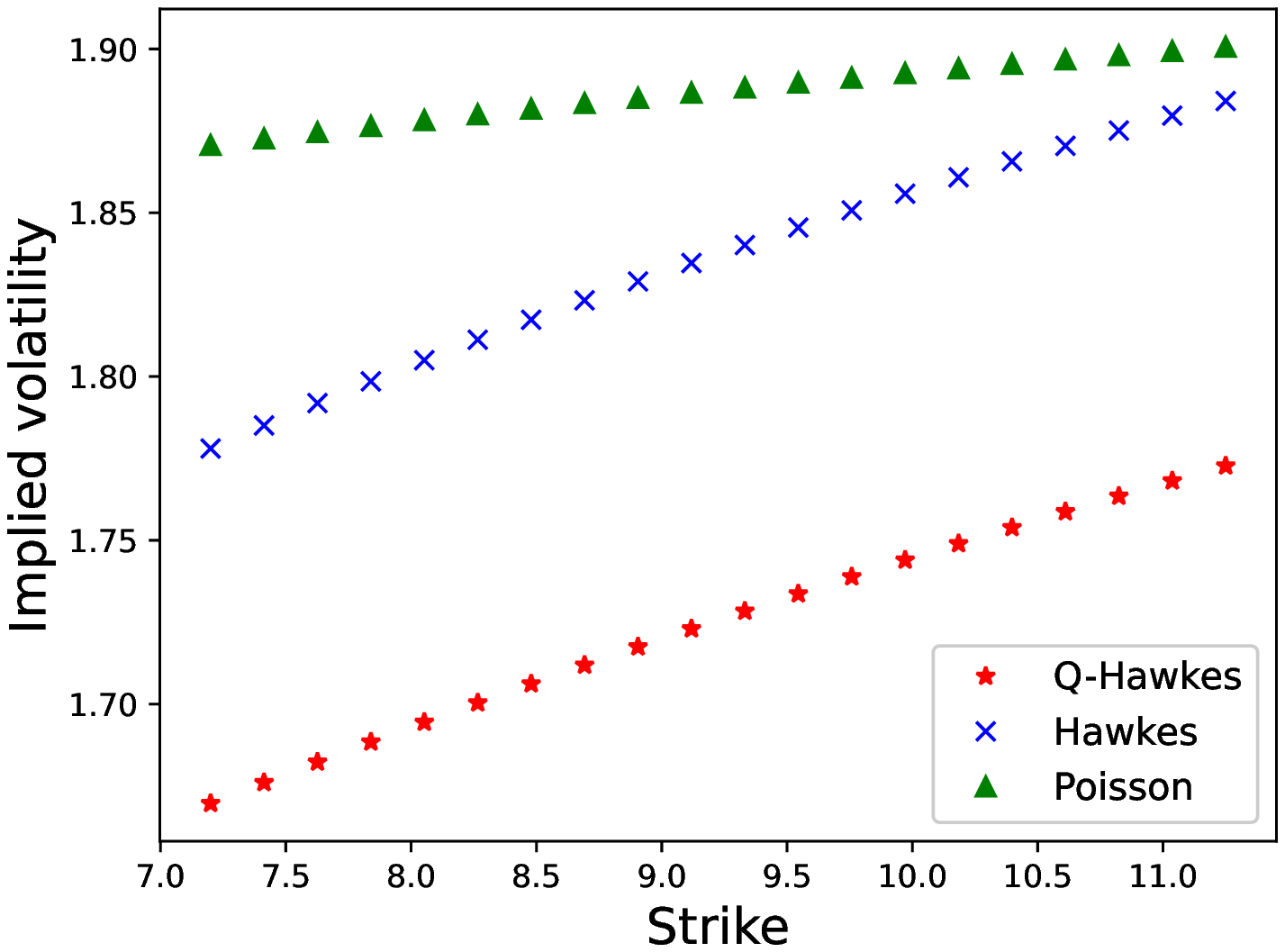}
    \caption{Scenario \textit{B}.}
    \label{fig:strike_B}
    \end{subfigure}
    \caption{Implied volatility as a function of the strike obtained from European put options for the HQH, HH and Bates models under Scenarios \textit{A} and \textit{B}. Maturity is set to one year.}
    \label{fig:strike}
\end{figure}

\begin{figure}[ht]
    \centering
    \begin{subfigure}{0.4\textwidth}
    \centering
    \includegraphics[width=\textwidth]{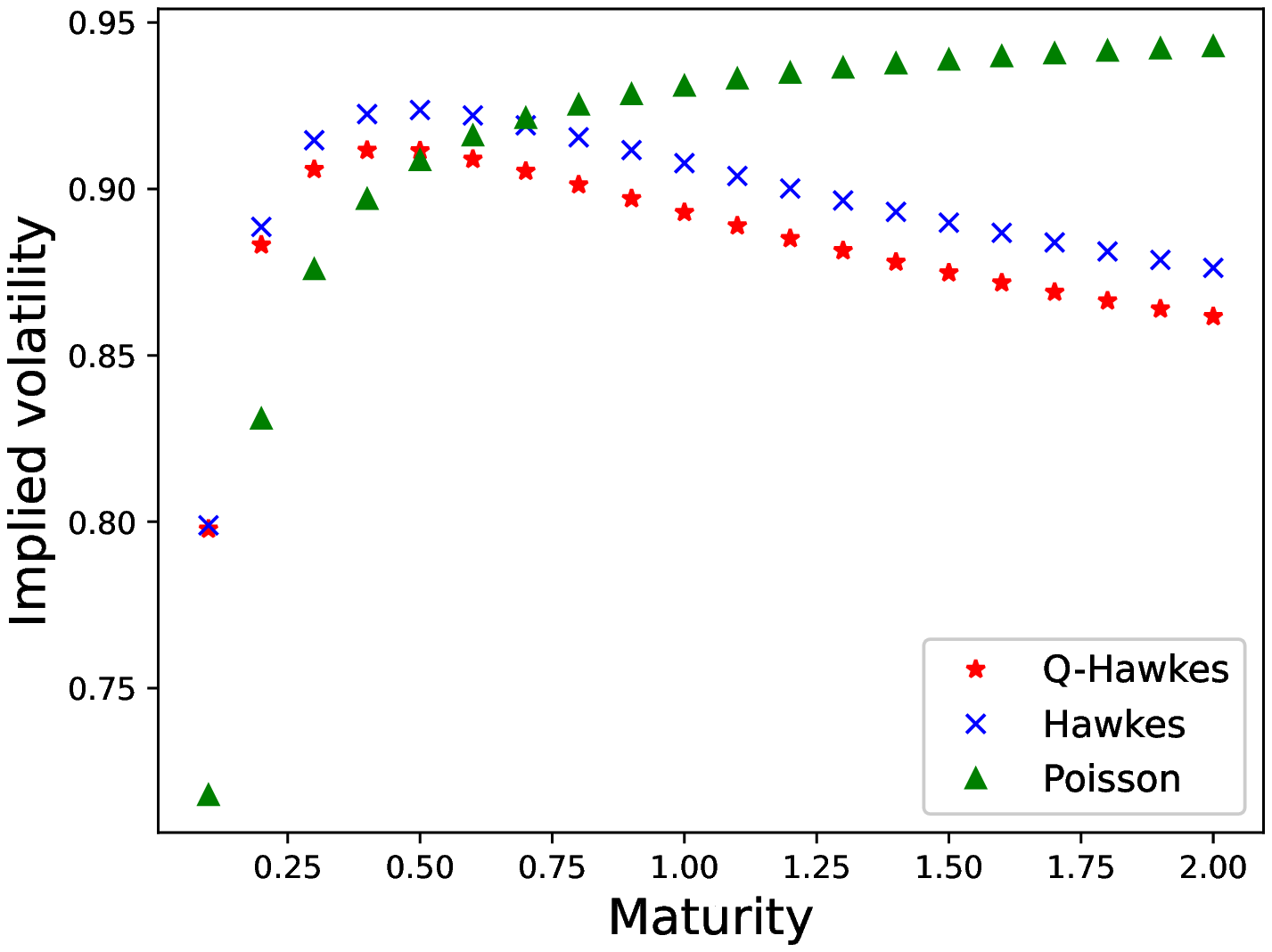}
    \caption{Scenario \textit{A}.}
    \label{fig:maturity_A}
    \end{subfigure}
    \begin{subfigure}{0.4\textwidth}
    \centering
    \includegraphics[width=\textwidth]{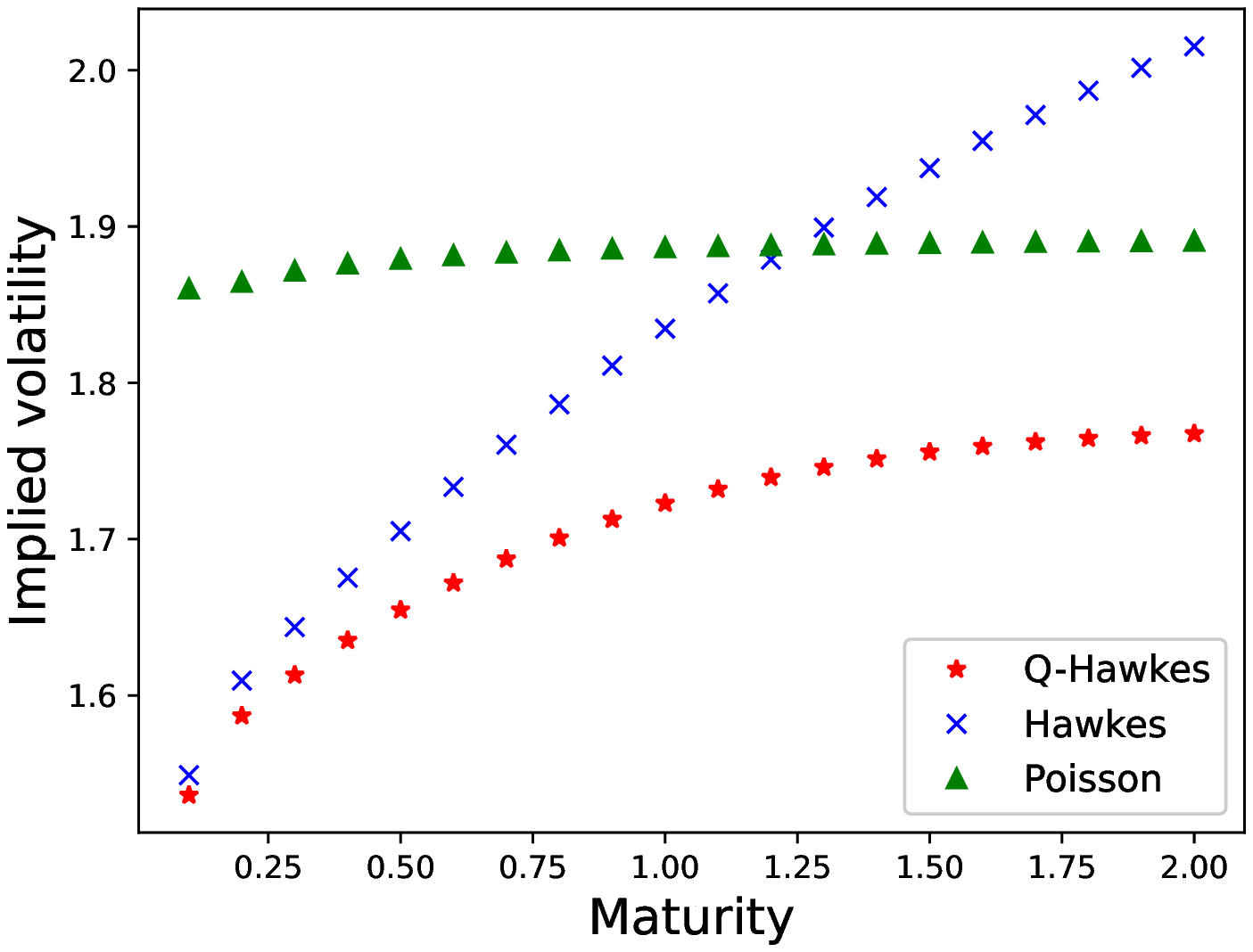}
    \caption{Scenario \textit{B}.}
    \label{fig:maturity_B}
    \end{subfigure}
    \caption{Implied volatility as a function of the maturity obtained from European put options for the HQH, HH and Bates models under Scenarios \textit{A} and \textit{B}. The strike price is chosen so that the option is at-the-money.}
    \label{fig:maturity}
\end{figure}

Apart from the implied volatilities, we also compare the computational time spent by each model pricing these options with the COS method. So as to have a robust estimate of the computational times, we price European put options on a grid of 21 strikes and 20 maturities---with a total of 420 evaluations---and calculate the average time over 50 runs. The outcome is in Table \ref{tab:put_speedup}, where apart from the total time we also show the speedup with respect to the HH model. We see that the HQH and Bates models outperform the HH model by a factor of 12.36 and 16.43, respectively. The results are analogous for Scenarios \textit{A} and \textit{B}, so we only present the results for Scenario \textit{B}.

\begin{table}[ht]
  \caption{Average computational time (in seconds) over 50 runs required for each model to calculate the grid of European put prices.}
  \centering
    \begin{tabular}{lccccc}\midrule\midrule 
      & Hawkes & Q-Hawkes & Bates\\ \cmidrule(l r){1-4}
        Time & 1.457 & 0.118 & 0.089\\ \cmidrule(l r){1-4}
     %\cmidrule(l r){1-3}
     Speedup & 1 & 12.36 & 16.43\\ 
      \midrule\midrule
    \end{tabular}
  \label{tab:put_speedup}
\end{table}

\subsection{Pricing Bermudan options}

We now consider the pricing of Bermudan put options. While plain vanilla options only depend on the characteristics of the underlying at the expiry date of the contract, Bermudan options give the holder the possibility of early exercise. Therefore, the shape of the underlying trajectories also plays a role in the final price of the option. Since self-exciting processes have quite different paths with respect to Poisson jumps, a priori it would be expected to see large differences in the prices between the models. However, we show next that this is not the case, at least not for the scenarios considered, and that the difference in prices is mostly a vertical shift in the implied volatility surfaces.

For that purpose, we perform the following pricing exercise. We compute the prices of at-the-money Bermudan put options with one-year maturity for different numbers of exercise dates. Naturally, with just one exercise date, we recover the European put price, while, as the number of exercise opportunities increases, the price converges to that of an American option, which is path-dependent. The parameters are set as in Table \ref{tab:parameters}. We present the results in Figures \ref{fig:bermudan_A} and \ref{fig:bermudan_B} for Scenarios \textit{A} and \textit{B}, respectively. 

It is clear that the qualitative behaviour of all models is similar. The implied volatility increases significantly during the first exercise dates, and then it quickly reaches a plateau from where it barely moves. The size of this increment appears to be similar in all models, although in Figure \ref{fig:bermudan_B} it is clear that the Bates model gives rise to the largest increment.

\begin{figure}[ht]
    \centering
    \begin{subfigure}{0.4\textwidth}
    \centering
    \includegraphics[width=\textwidth]{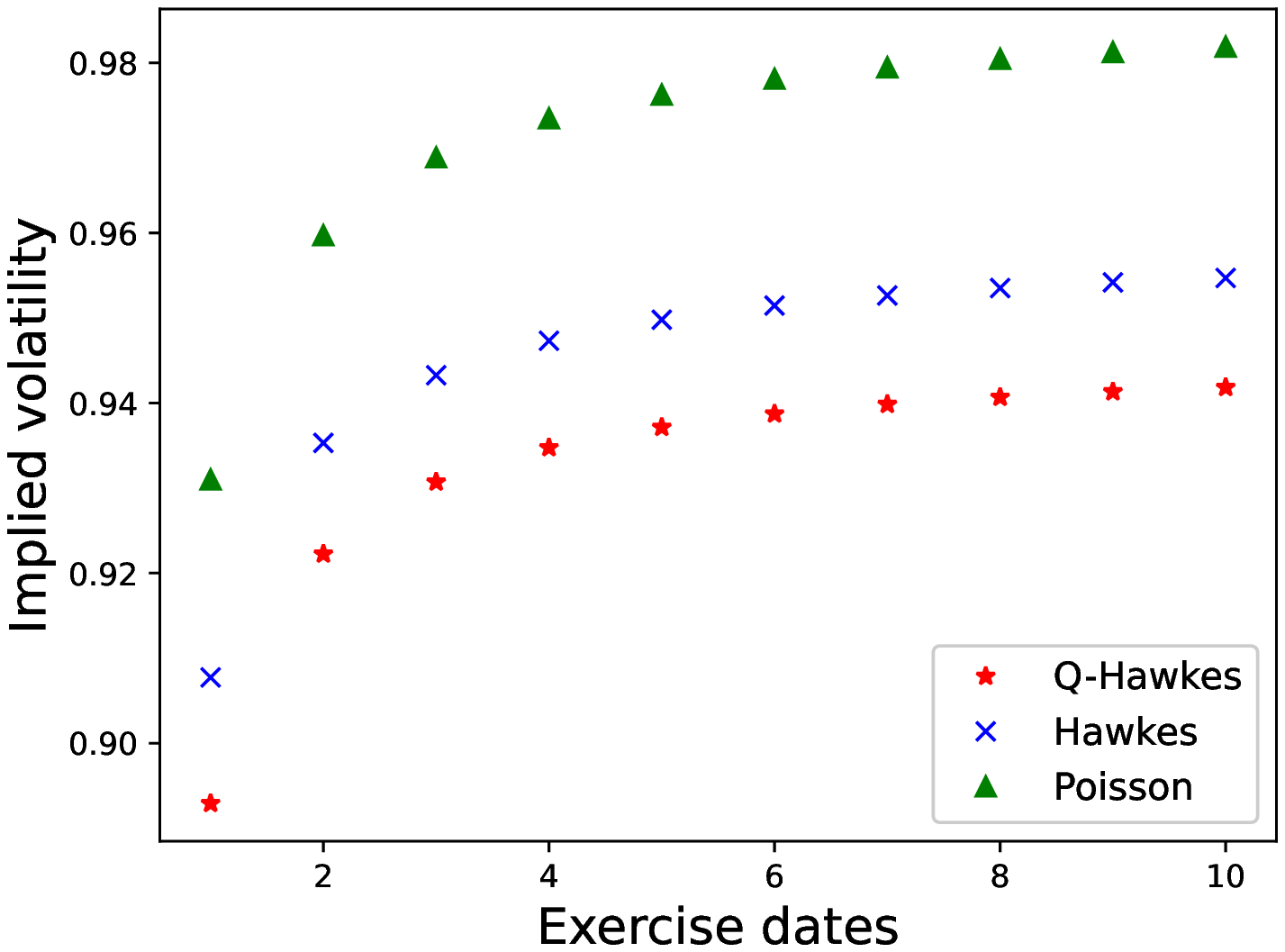}
    \caption{Scenario \textit{A}.}
    \label{fig:bermudan_A}
    \end{subfigure}
    \begin{subfigure}{0.4\textwidth}
    \centering
    \includegraphics[width=\textwidth]{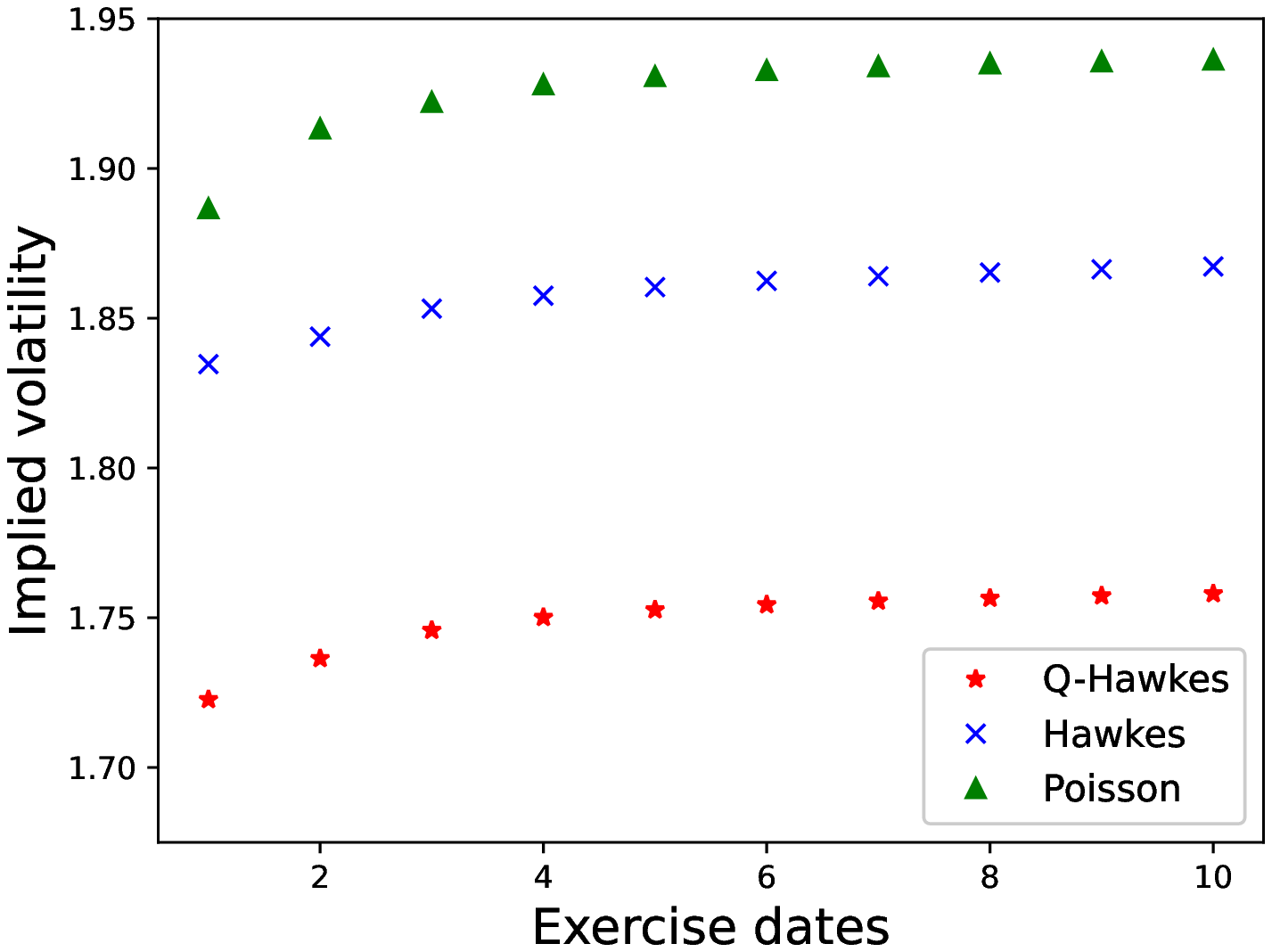}
    \caption{Scenario \textit{B}.}
    \label{fig:bermudan_B}
    \end{subfigure}
    \caption{Implied volatility as a function of the number of exercise dates obtained from Bermudan put options for the HQH, HH and Bates models under Scenarios \textit{A} and \textit{B}. The strike price is chosen so that the option is at-the-money and the maturity is set to one year.}
    \label{fig:bermudan}
\end{figure}

\subsection{Sensitivity analysis}

We finally perform a ceteris-paribus sensitivity analysis for some\footnote{Here we only present the parameters with the largest influence on the shape of volatility smiles and the differences between the HQH and HH models. However, the code used for all the experiments, including the ones not shown here, can be found in the following GitHub repository: \url{https://github.com/LuisSouto/Jumps}.} of the parameters influencing the self-exciting behaviour of the models on the prices of European put options. We compare the differences in prices between the HQH and HH models, and analyze which parameters have the largest impact on those differences\footnote{Since the Bates model is not sensitive to some of these parameters, we do not consider it for this study.}. Furthermore, during the remainder of this section, we assume European put options to be at-the-money with one-year maturity. The parameters are as per Table \ref{tab:parameters}, unless stated otherwise.

As a first experiment, we fix all parameters except the clustering rate $\alpha$. In order to maintain the stability condition, we allow it to vary only within the interval $[0,\beta)$. The results are shown in Figures \ref{fig:sens_a_A} and \ref{fig:sens_a_B}. As expected, the implied volatility increases with the clustering rate, since it adds uncertainty to the model. The shape of the curves is very similar for both models, even overlapping most of the time. For low clustering rates, they both converge to the Bates model, due to the lack of self-excitation. Notwithstanding, for large values of $\alpha$ the differences between the HQH and HH models clearly increase. Moreover, this gap gets more pronounced in Scenario \textit{B} than in Scenario \textit{A}.

\begin{figure}[ht]
    \centering
    \begin{subfigure}{0.4\textwidth}
    \centering
    \includegraphics[width=\textwidth]{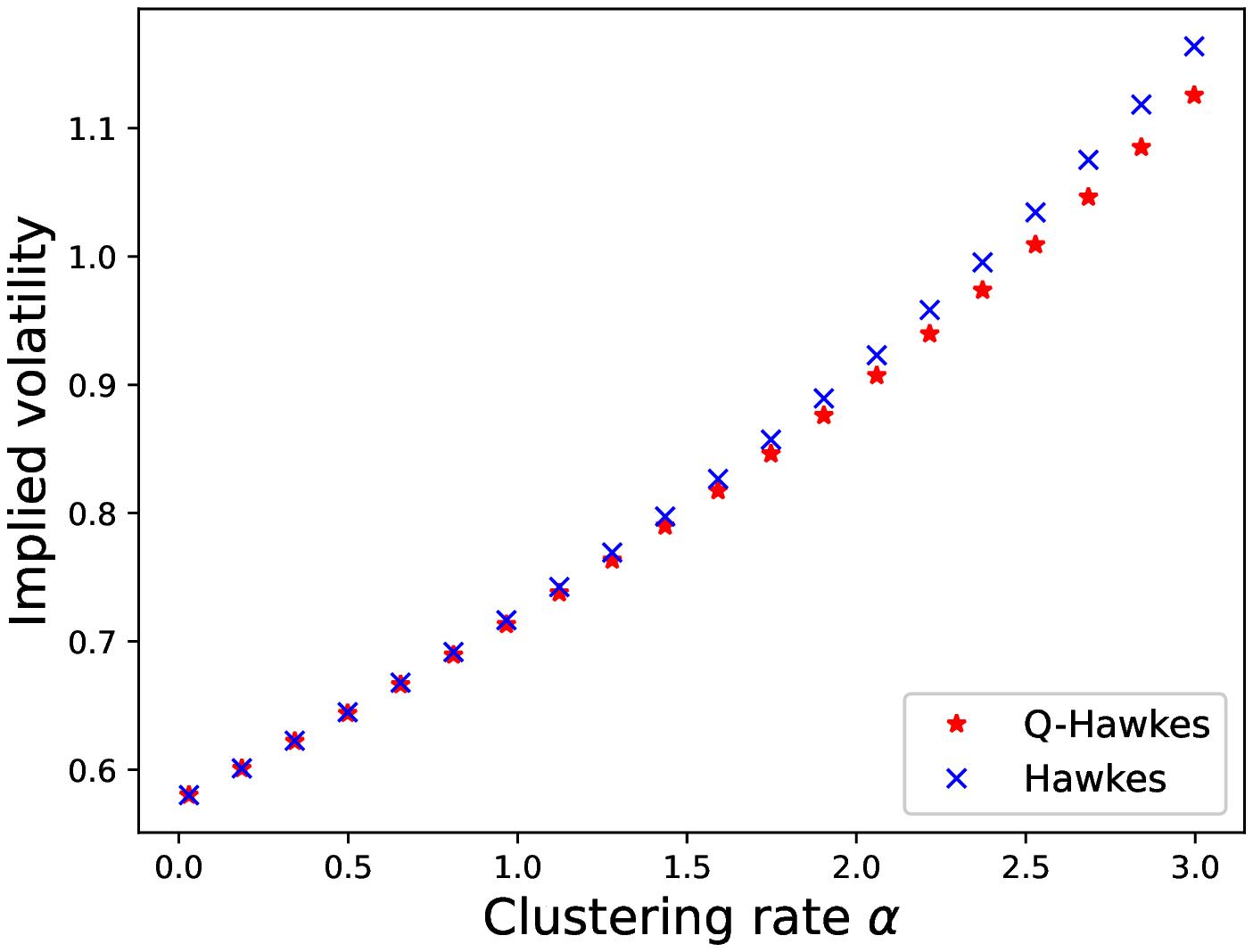}
    \caption{Scenario \textit{A}.}
    \label{fig:sens_a_A}
    \end{subfigure}
    \begin{subfigure}{0.4\textwidth}
    \centering
    \includegraphics[width=\textwidth]{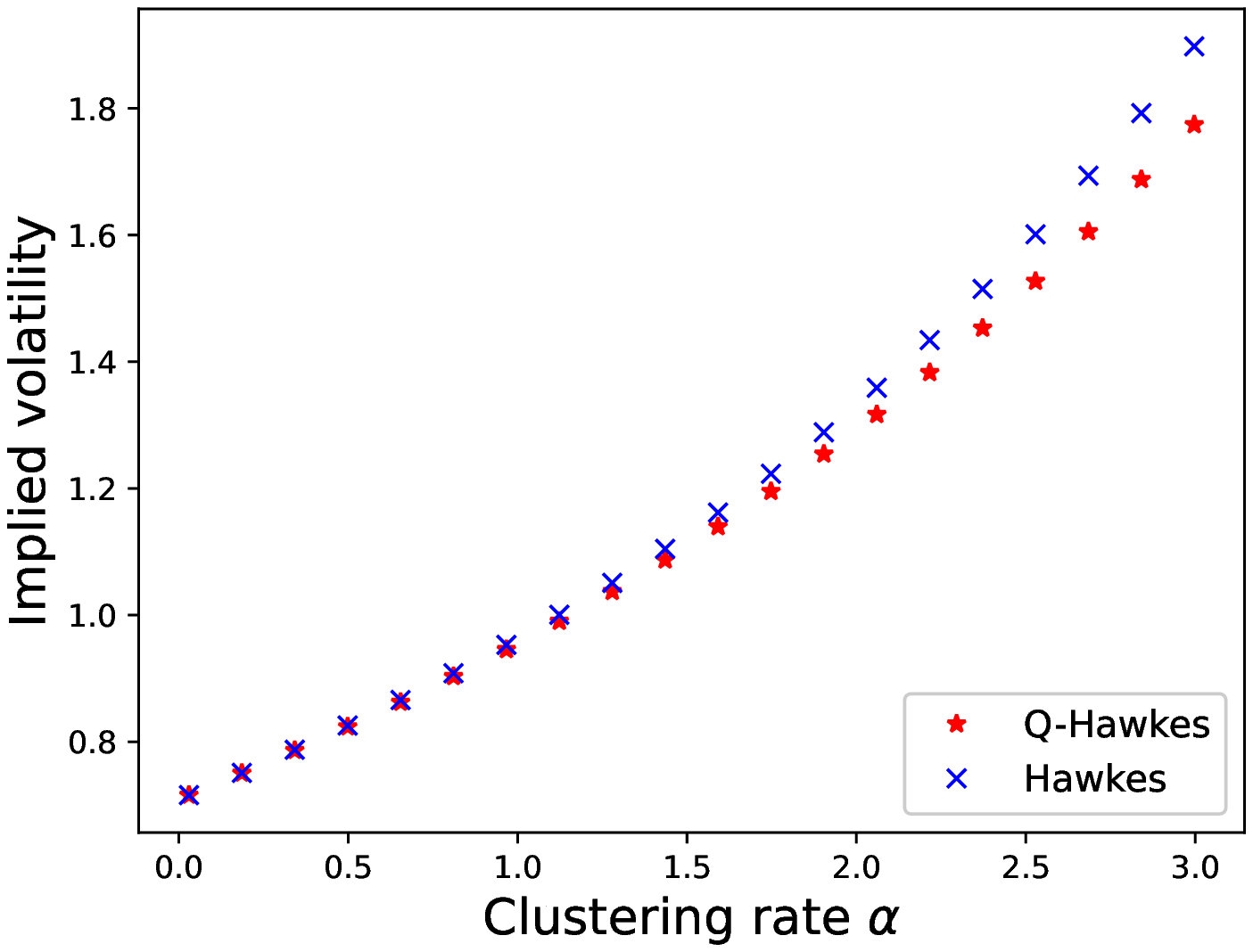}
    \caption{Scenario \textit{B}.}
    \label{fig:sens_a_B}
    \end{subfigure}
    \caption{Implied volatility as a function of the clustering rate obtained from European put options for the HQH and HH models under Scenarios \textit{A} and \textit{B}. The strike price is chosen so that the option is at-the-money and the maturity is set to one year.}
    \label{fig:sens_a}
\end{figure}

For the next experiment, we perform a similar study for the expiration rate $\beta$. Also in order to maintain the stability condition, $\beta$ is only allowed to take values in $(\alpha,\infty)$. Notice that large values of $\beta$ imply an extremely short memory of each event. In the limit $\beta \rightarrow \infty$, the effect of the jumps is immediately eliminated, so there is no self-excitation and the models converge to the Bates model with intensity $\lambda^*$. This can be seen in Figures \ref{fig:sens_b_A} and \ref{fig:sens_b_B}, where the implied volatilities of the HQH and HH models match for large values of $\beta$. The decrease in volatility with $\beta$ is also explained by the corresponding decrease in the uncertainty the models can generate. On the other hand, the discrepancies between the models grow for $\beta$ close to $\alpha$. From this and the previous experiment, we can conclude that the differences between the HQH and HH models increase with the degree of self-excitation, represented by the ratio $\alpha/\beta$.

\begin{figure}[ht]
    \centering
    \begin{subfigure}{0.4\textwidth}
    \centering
    \includegraphics[width=\textwidth]{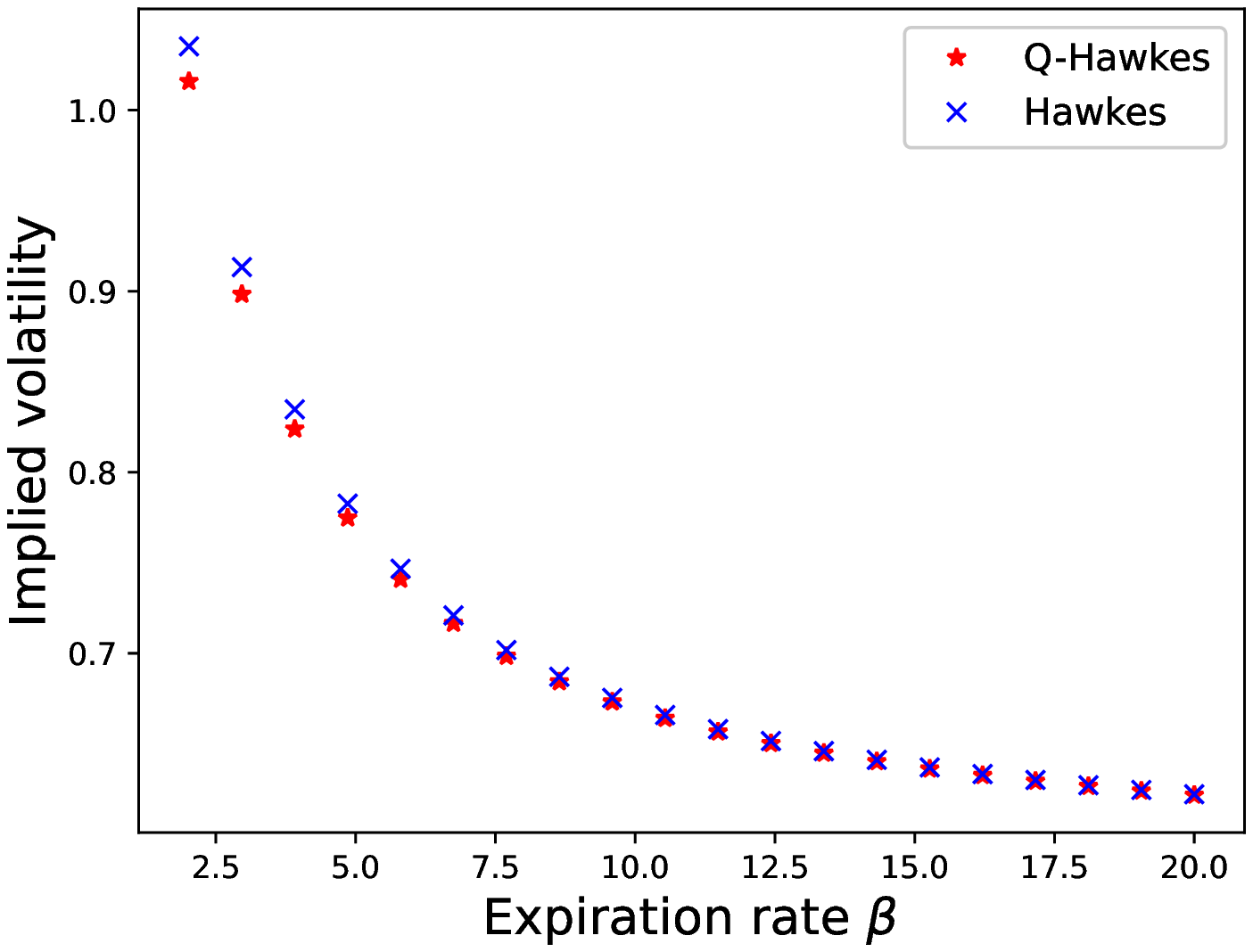}
    \caption{Scenario \textit{A}.}
    \label{fig:sens_b_A}
    \end{subfigure}
    \begin{subfigure}{0.4\textwidth}
    \centering
    \includegraphics[width=\textwidth]{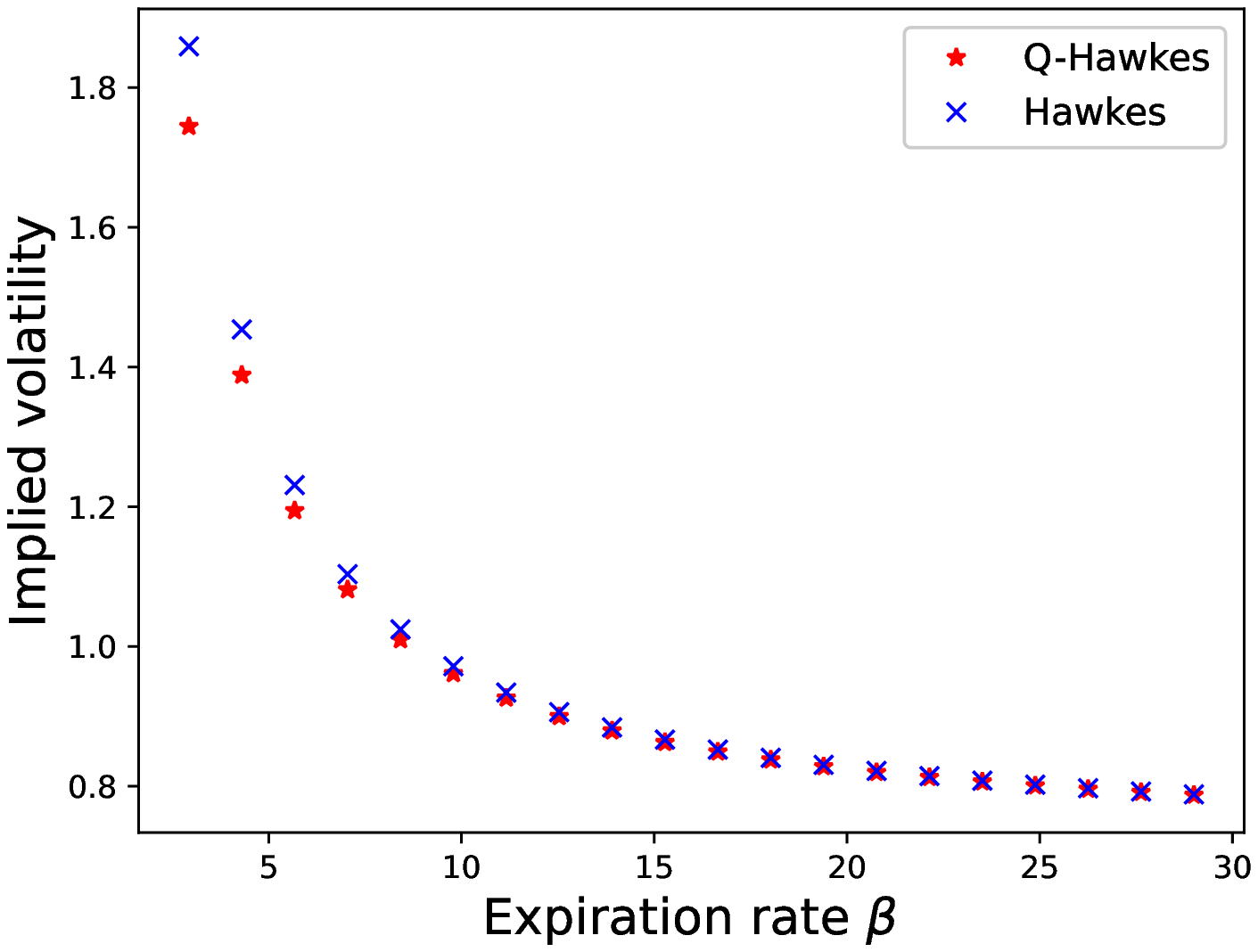}
    \caption{Scenario \textit{B}.}
    \label{fig:sens_b_B}
    \end{subfigure}
    \caption{Implied volatility as a function of the expiration rate obtained from European put options for the HQH and HH models under Scenarios \textit{A} and \textit{B}. The strike price is chosen so that the option is at-the-money and the maturity is set to one year.}
    \label{fig:sens_b}
\end{figure}

The next parameter we analyze is the initial value $Q_0$. The impact of this parameter on the volatility smiles can be seen in Figures \ref{fig:sens_Q0_A} and \ref{fig:sens_Q0_B}. For values of $Q_0$ close to zero, the volatility curves almost overlap, but they move apart as $Q_0$ grows. These effects are much more accentuated in scenario \textit{B}, where the degree of self-excitation is also larger. Notice that a large $Q_0$---for these values of $\alpha$---implies a substantial deviation of $\lambda_0$ from the baseline intensity. Because $\lambda^*$ is the minimum value of the intensity, it can be associated with a state of no self-excitation. Conversely, a large $Q_0$ value means the process is going through a phase of self-excitation. Thus, the ratio $\alpha Q_0/\lambda_0$ gives us an indicator of the state of self-excitation the process is currently in. As with the degree of self-excitation, the differences between the HQH and HH models increase with the state of self-excitation.

\begin{figure}[ht]
    \centering
    \begin{subfigure}{0.4\textwidth}
    \centering
    \includegraphics[width=\textwidth]{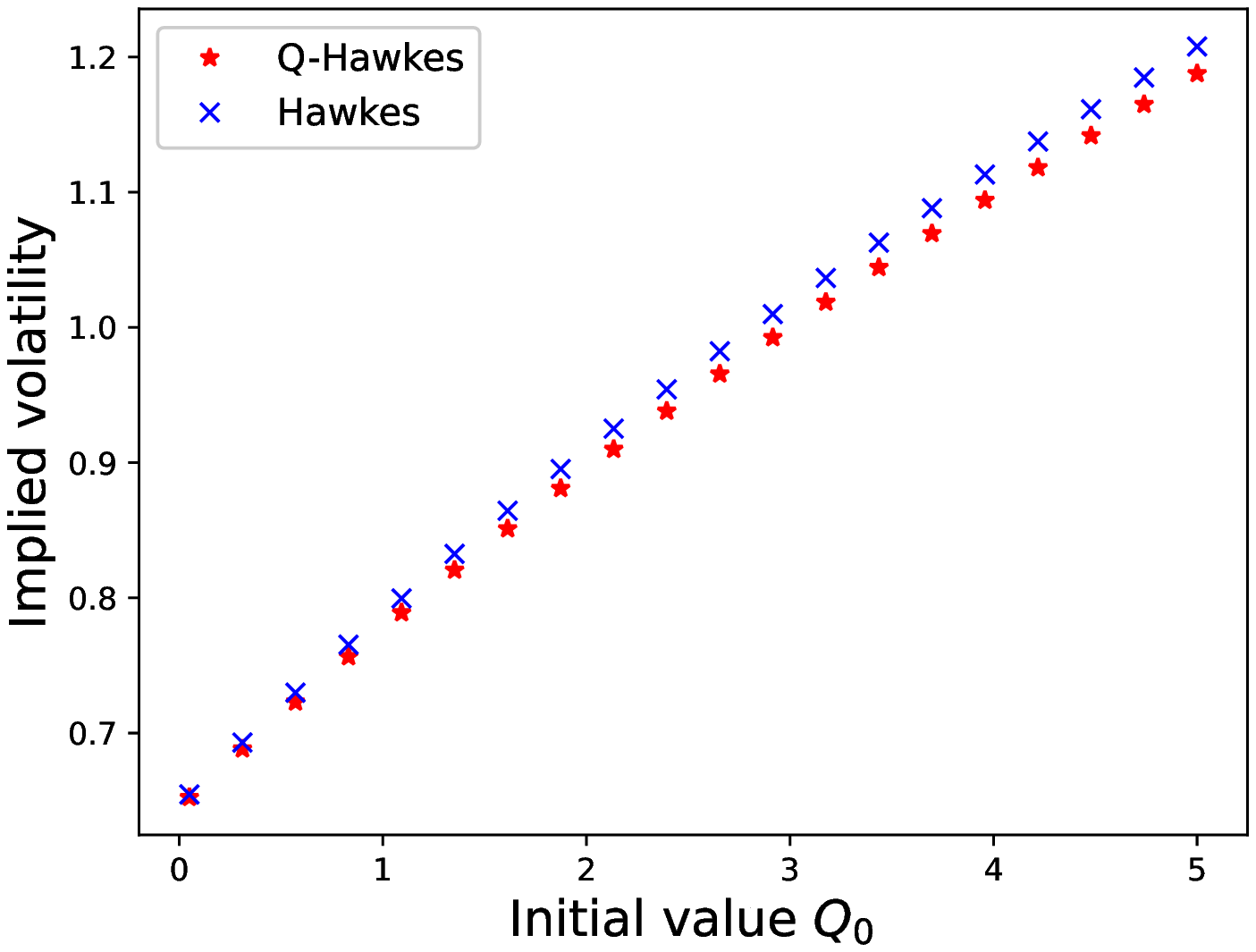}
    \caption{Scenario \textit{A}.}
    \label{fig:sens_Q0_A}
    \end{subfigure}
    \begin{subfigure}{0.4\textwidth}
    \centering
    \includegraphics[width=\textwidth]{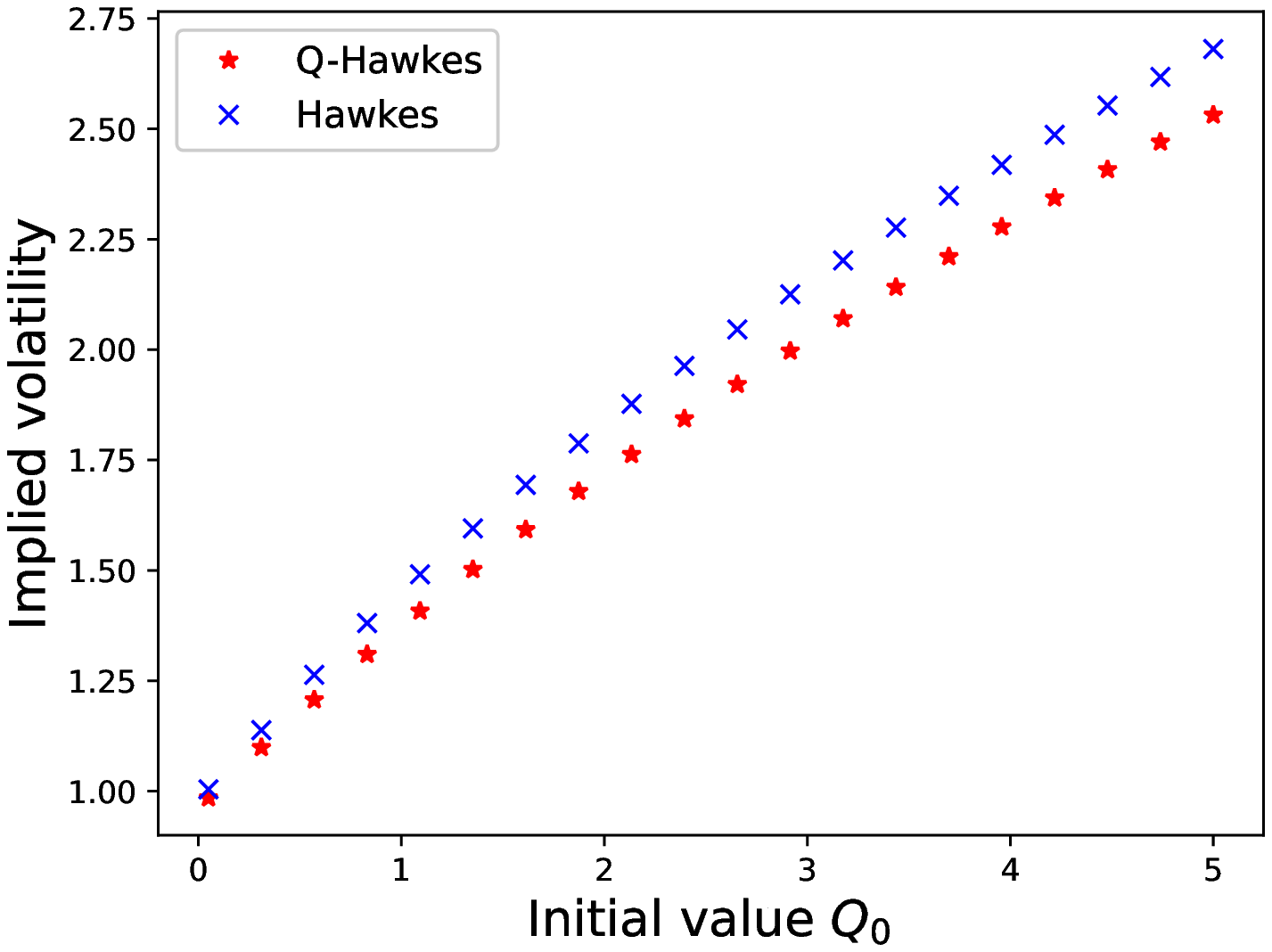}
    \caption{Scenario \textit{B}.}
    \label{fig:sens_Q0_B}
    \end{subfigure}
    \caption{Implied volatility as a function of the initial value $Q_0$, obtained from European put options for the HQH and HH models under Scenarios \textit{A} and \textit{B}. The strike price is chosen so that the option is at-the-money and the maturity is set to one year.}
    \label{fig:sens_Q0}
\end{figure}

The parameters analyzed so far had an influence on the distribution of the number of jumps. Now we study what happens when we modify the distribution of the jump sizes. With $Z$ denoting the jump size, remember that here we assume a normal distribution for $Y = \log(1+Z)$. Thus the parameters are the mean $\mu_Y$ and standard deviation $\sigma_Y$ of $Y$. Since experiments not presented here showed not significant findings about the impact of $\sigma_Y$ on the volatility smiles, next we show only the results for $\mu_Y$.

In Figures \ref{fig:sens_muY_A} and \ref{fig:sens_muY_B} the volatility curves as a function of $\mu_Y$ can be observed. From there we see that this is the only variable in the study which does not produce a monotonic volatility curve. The reason why the volatility grows with large $\mu_Y$ is clear. It increases the expectation of the compensated jump $M_t$ (see Equation \eqref{eq:mt}), which in turn increases the variability of the asset-price. In order to understand the behaviour in the other direction, we first notice that the expectation of $M_t$, in both models, is equal to
\begin{equation}\label{eq:mean_mt}
    \mathbb{E}[M_t-M_s|\mathcal{F}_s] = \frac{\bar{\mu}_Y-\mu_Y}{\beta-\alpha}\left(\lambda^*\beta \left(\frac{e^{(\alpha-\beta)\tau}-1}{\beta-\alpha}-\tau\right)+\lambda_0(e^{(\alpha-\beta)\tau}-1)\right).
\end{equation}
The most important aspect---for our current purposes---about Equation \eqref{eq:mean_mt}, is that it depends on the quantity $(\bar{\mu}_Y-\mu_Y)$. This quantity is not monotonic with $\mu_Y$. Starting from large negative values of $\mu_Y$, it decreases until it reaches a minimum, then it starts growing again. This explains why the volatility curves are not monotonic as functions of this parameter, although there are of course moments of higher order involved. In particular, the minimum of Equation \eqref{eq:mean_mt} does not match with the minimum observed in Figures \ref{fig:sens_muY_A} and \ref{fig:sens_muY_B}. Interestingly, the same conclusions apply to the discrepancies between the HQH and HH models.

\begin{figure}[ht]
    \centering
    \begin{subfigure}{0.4\textwidth}
    \centering
    \includegraphics[width=\textwidth]{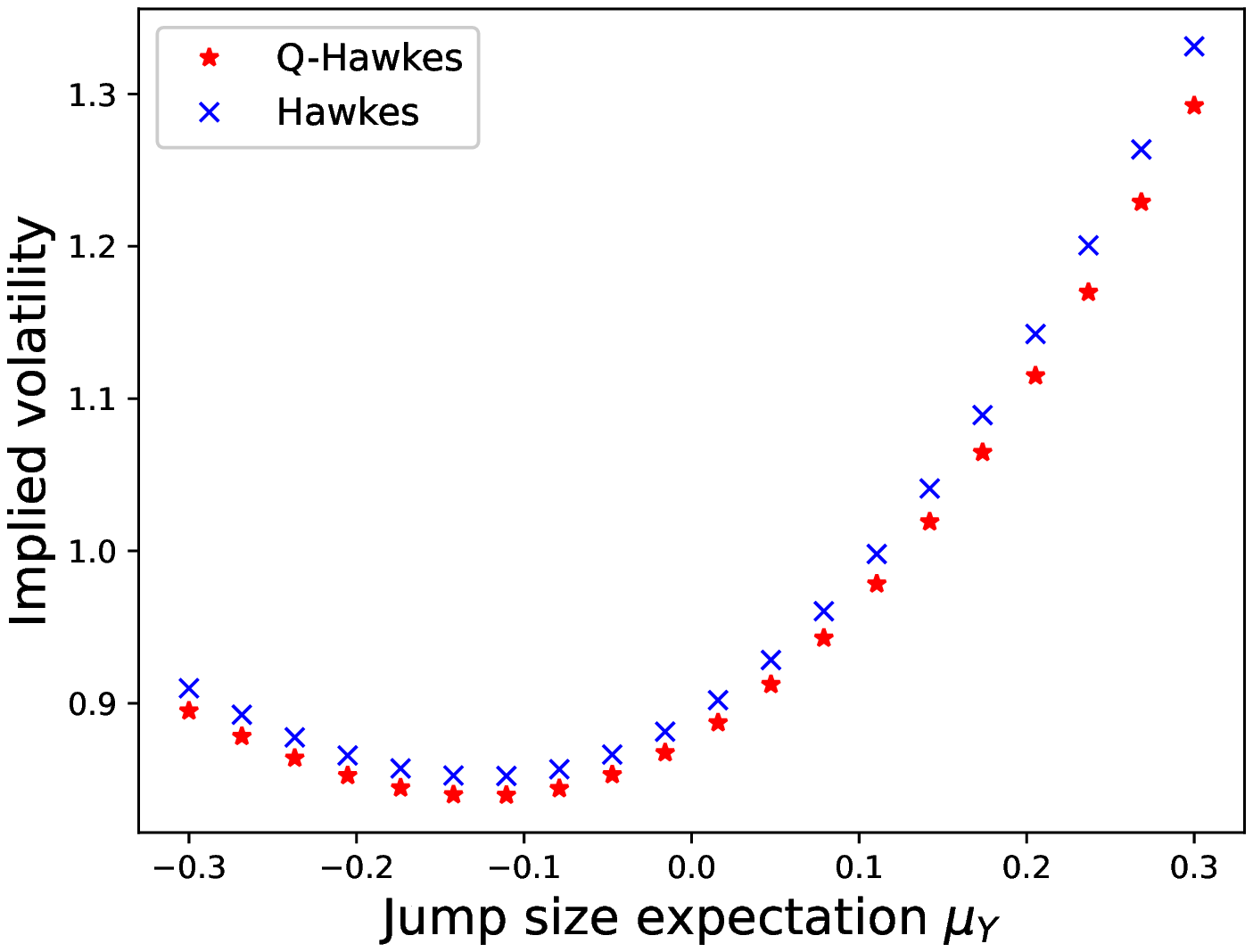}
    \caption{Scenario \textit{A}.}
    \label{fig:sens_muY_A}
    \end{subfigure}
    \begin{subfigure}{0.4\textwidth}
    \centering
    \includegraphics[width=\textwidth]{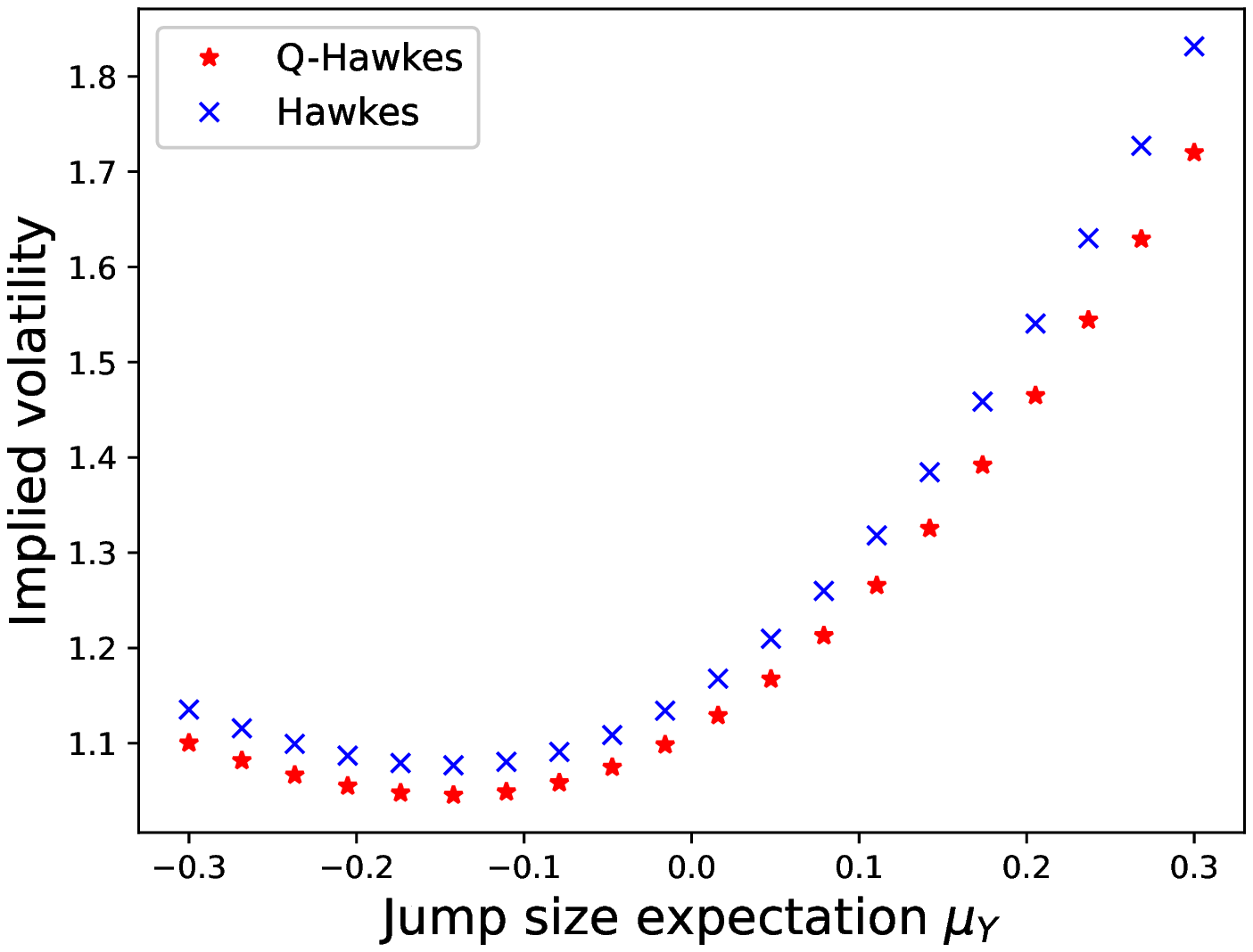}
    \caption{Scenario \textit{B}.}
    \label{fig:sens_muY_B}
    \end{subfigure}
    \caption{Implied volatility as a function of the jump size expectation $\mu_Y$, obtained from European put options for the HQH and HH models under Scenarios \textit{A} and \textit{B}. The strike price is chosen so that the option is at-the-money and the maturity is set to one year.}
    \label{fig:sens_muY}
\end{figure}

\section{Conclusions}\label{sec:conclusions}

We have introduced a novel jump-diffusion process, the HQH model, based on the Q-Hawkes jump process of \cite{Daw2022}. This process enjoys desirable analytical properties, such as a closed-form formula for the characteristic function. Due to this feature, pricing derivatives via Fourier-based algorithms, like the COS method, is computationally cheaper compared to other models, like the HH model. Moreover, we have shown that the COS method can benefit from further properties of the characteristic function, namely partial integrals with respect to some of its arguments. This reduces the dimensionality of the cosine expansion and thus also the numerical complexity. In particular, for the HQH model, the dimensionality can be reduced from three to one dimensions. On the other hand, the cosine expansion of the HH model is two-dimensional.

Numerical experiments indicate that the qualitative behaviour of the implied volatility curves with respect to the option features and model parameters is very similar for the HQH and the HH models. Therefore, the presence of self-excitation in both models makes them more flexible than the traditional Bates model, and for many applications more realistic. Quantitatively, the HH model always gives higher volatilities, although the difference can be considerably small depending on the particular setting.

Considering the promising results of the Q-Hawkes process as a possible tool in the repertoire of jump models in financial applications, our lines of future research will aim at exploring its properties, and proposing generalizations. 

In particular, the Q-Hawkes counterpart of the multidimensional Hawkes process would be of great interest, since this would allow for the inclusion of cross-excitation. Moving to higher dimensions could imply that the characteristic function is no longer analytic, but the process should still be an affine jump-diffusion model. Thus it should have, at least, the same tractability as the Hawkes process. Other non-trivial generalizations are the inclusion of marks and a more general expiration mechanism, so it can also be related to other memory kernels of the Hawkes process that are not exponential.

As already observed in \cite{Daw2022}, an alternative representation in terms of combinatorial stochastic processes--namely urns \cite{mahmoud08}--could also represent a viable and interesting line a work.

\section*{Acknowledgments}
This research has been financed by the European Union, under the H2020-EU.1.3.1. MSCA-ITN-2018 scheme, Grant 813261.

\bibliography{references.bib}
\bibliographystyle{myplainnat}  

\appendix

\section{Complex integral of Proposition \ref{prop:qt_dens}}\label{app:comp_int}

Here we present the details of how to compute the inverse Fourier transform of the characteristic function of a negative binomial distribution. In particular, the aim is to solve the following complex-valued integral
\begin{equation}\label{eq:comp_integral1}
    I(x) \coloneqq \int_{-\infty}^{+\infty} \left( \frac{p(\tau)}{1-(1-p(\tau))e^{i u}} \right)^{\frac{\lambda_0}{\alpha}}
    e^{-i u x} du.
\end{equation}
First, we need to define a contour of integration. In order to do so, notice that the denominator in Equation \eqref{eq:comp_integral1} becomes zero at $u^* = i \log(1-p(\tau))$, and so this is a branching point that we need to take into account when defining the contour. Since\footnote{It is trivial to check that $p(t)$ in Equation \eqref{eq:ese_charq} verifies this condition for any $t \in \mathbb{R}^+$, as long as $\beta > \alpha$. This is the same as the stability condition for the Hawkes process (see Section \ref{sec:hawk}), which we assume throughout this article.} $0\leq p(\tau) \leq 1$, the branching point is negative and purely imaginary. Therefore, and appropriate contour of integration is given in Figure \ref{fig:contour_integral1} (see \cite{Gogolin2014}).
\begin{figure}[ht]
\centering
\begin{tikzpicture}
[decoration={markings,
mark=at position 1cm with {\arrow[line width=1pt]{>}},
mark=at position 3cm with {\arrow[line width=1pt]{>}},
mark=at position 5.5cm with {\arrow[line width=1pt]{>}},
mark=at position 7.6cm with {\arrow[line width=1pt]{>}},
mark=at position 8.9cm with {\arrow[line width=1pt]{>}},
mark=at position 10.14cm with {\arrow[line width=1pt]{>}},
mark=at position 12.24cm with {\arrow[line width=1pt]{>}}
}]
% The axes
\draw[help lines,->] (-3,0) -- (3,0) coordinate (xaxis);
\draw[help lines,->] (0,-2.5) -- (0,1) coordinate (yaxis);

% The path
\path[draw,line width=0.8pt,postaction=decorate] (-2,0) -- (2,0) arc(0:-86:2) -- +(0,1) arc(-63:243:0.3) -- +(0,-1) arc(-94:-180:2);

% The labels
\node[below] at (xaxis) {$x$};
\node[left] at (yaxis) {$y$};
\node[below left] {$O$};
\node at (-1,-0.3) {$P$};
\node at (0.5,-1.4) {$C_{+}$};
\node at (-0.5,-1.4) {$C_{-}$};
\node at (-1.5,-1.8) {$C_{R}^{-}$};
\node at (1.5,-1.8) {$C_{R}^{+}$};
\node at (0.7,-0.5) {$C_{\rho}$};
\end{tikzpicture}    
    \caption{Integration domain for the integral in Equation \eqref{eq:comp_integral1}. Note that, when integrating, we take limits $\rho \rightarrow 0$ for the radius of the inner arc, and $R \rightarrow \infty$ for the radius of the outer arc. The arc $C_{\rho}$ is centered around the divergence point $u^* = i \log(1-p)$, so that this point lies outside the contour.}
    \label{fig:contour_integral1}
\end{figure}

The next step is to apply Cauchy's theorem to the contour integral. Since the contour of integration is defined outside the divergence point, we obtain
\begin{equation}\label{eq:cauchy_th}
    \int_P + \int_{C_R^+} + \int_{C_+} + \int_{C_{\rho}} + \int_{C_-} + \int_{C_R^-}= 0,
\end{equation}
where  

\begin{equation}
    I(x) = \lim \limits_{\substack{%
    R \to +\infty\\
    \rho \to 0}} \int_P \coloneqq \lim \limits_{\substack{%
    R \to +\infty\\
    \rho \to 0}} \int_P \left( \frac{p(\tau)}{1-(1-p(\tau))e^{i u}} \right)^{\frac{\lambda_0}{\alpha}}
    e^{-i u x} du.
\end{equation}

The remaining part of the proof is to solve these integrals. For that purpose, we assume, without loss of generality\footnote{Notice that we can write $\lambda_0 = \tilde{\lambda}_0 + n \alpha$, where $n$ is an integer and $\tilde{\lambda}_0 < \alpha$. Then the same result follows by applying the derivative operator in Equation \eqref{eq:comp_integral1} and the properties of the Fourier transform.}, that $\lambda_0 < \alpha$. Then the integrals over $C_{\rho}$, $C_R^+$ and $C_R^-$ vanish in the limit $\rho \rightarrow 0$ and $R \rightarrow \infty$, respectively. Moreover, after applying these limits, the integral over the contour $C_+$ is given by:
\begin{equation}
    \int_{C+} \coloneqq \int_{-i\infty}^{u^*} \left( \frac{p(\tau)}{1-(1-p(\tau))e^{i u}} \right)^{\frac{\lambda_0}{\alpha}}
    e^{-i u x} du.
\end{equation}
Applying the change of variables $z = 1-(1-p(\tau))e^{iu}$, we obtain
\begin{equation}\label{eq:int_c+}
    \int_{C_+} = p(\tau)^{\frac{\lambda_0}{\alpha}}(1-p(\tau))^x\int_{-\infty}^0 \frac{z^{-\frac{\lambda_0}{\alpha}}}{(1-z)^{1+x}} \, dz.
\end{equation}
Due to the symmetry of the contour of integration, the integral over $C_-$ is also given by Equation \eqref{eq:int_c+}, with the difference that the complex variable has rotated to a new branch from $C_+$. Therefore, the argument in the integral of $C_-$ is $ze^{i2\pi}$ instead of $z$. Combining both integral gives us
\begin{equation}
    \int_{C_+} + \int_{C_-} = -2p(\tau)^{\frac{\lambda_0}{\alpha}}(1-p(\tau))^x (-1)^{\frac{\lambda_0}{\alpha}}\sin \left(\frac{\pi \lambda_0}{\alpha}\right)\int_{-\infty}^0 \frac{z^{-\frac{\lambda_0}{\alpha}}}{(1-z)^{1+x}} \, dz.
\end{equation}
Next, we perform the change of variables $(1-z)^{-1} = s$ and use Equation \eqref{eq:cauchy_th} to obtain $I(x)$ in terms of $\int_{C_+}$ and $\int_{C_-}$:
\begin{equation}
    I(x) = p(\tau)^{\frac{\lambda_0}{\alpha}}(1-p(\tau))^{x}\sin\left(\frac{\pi \lambda_0}{\alpha}\right)B\left(x+\frac{\lambda_0}{\alpha},1-\frac{\lambda_0}{\alpha}\right),
\end{equation}
where $B(\cdot,\cdot)$ is the Beta function. This expression can be simplified by noticing the following property of the Gamma function:
\begin{equation}
    \Gamma(1-z)\Gamma(z) = \frac{\pi}{\sin(\pi z)},
\end{equation}
which leads to the solution of Equation \eqref{eq:comp_integral1} in terms of the negative binomial distribution.

\end{document}